\newtheorem{theorem}{Theorem}
\newtheorem{lemma}{Lemma}
\newtheorem{remark}{Remark}
\newtheorem{assumption}{Assumption}
\def\wh{\widehat}
\def\wt{\widetilde}
\begin{document}

\title{ Robust Functional Principal Component Analysis for Non-Gaussian Longitudinal Data }
\author[a]{{\fontsize{12pt}{18pt}\selectfont Rou Zhong}}
\author[a]{{\fontsize{12pt}{0.5em}\selectfont Shishi Liu}}
\author[a]{{\fontsize{12pt}{0.5em}\selectfont Jingxiao Zhang} \thanks{zhjxiaoruc@163.com}}
\author[b]{{\fontsize{12pt}{0.5em}\selectfont Haocheng Li}}
\affil[a]{{\emph\fontsize{12pt}{0.5em}\selectfont Center for Applied Statistics, School of Statistics, Renmin University of China}}
\affil[b]{{\emph\fontsize{12pt}{0.5em}\selectfont Department of Mathematics and Statistics, University of Calgary}}
\date{}
\maketitle

\begin{abstract}

Functional principal component analysis is essential in functional data analysis, but the inferences will become unconvincing when some non-Gaussian characteristics occur, such as heavy tail and skewness. The focus of this paper is to develop a robust functional principal component analysis methodology in dealing with non-Gaussian longitudinal data, for which sparsity and irregularity along with non-negligible measurement errors must be considered. We introduce a Kendall's $\tau$ function whose particular properties make it a nice proxy for the covariance function in the eigenequation when handling non-Gaussian cases. Moreover, the estimation procedure is presented and the asymptotic theory is also established. We further demonstrate the superiority and robustness of our method through simulation studies and apply the method to the longitudinal CD4 cell count data in an AIDS study.

\textbf{Keywords}: Functional principal component analysis, non-Gaussian, longitudinal study, Kendall's $\tau$ function, local polynomial smoother.
\end{abstract}

\section{ Introduction }

Functional data analysis has become increasingly important due to its ubiquity in various areas. Many works have been devoted to the development of statistical methods for functional data. We recommend \citet{wang2016functional} for a simple overview of some related studies. Moreover, the monograph by \citet{ramsay2005functional} gave a comprehensive illustration for the major tools in functional data analysis.

With the rapid advancement of functional methodology, longitudinal study under functional framework gradually attracts attention. Longitudinal data are typically measured sparsely and irregularly over time with non-ignorable measurement errors. The dominant methods in longitudinal study are mainly through parametric models, such as random-effects models or generalized estimating equations, for which model assumptions are indispensable. Nevertheless, by regarding longitudinal data as some discrete observations that irregularly sampled from an underlying random process with random fluctuation, nonparametric functional approaches can be exerted. Therefore, less assumptions are needed and greater flexibility can be achieved. For this reason, a great deal of papers explored longitudinal and functional data simultaneously along this line of thought. In term of functional principal component analysis, \citet{yao2005functional} first proposed a technique for longitudinal and functional data through conditional expectation. Subsequently, \citet{hall2008modelling} further considered the latent Gaussian processes, while \citet{jiang2010covariate} accommodated covariate information in the process of functional principal component analysis. \citet{hall2006properties} and \citet{li2010uniform} conducted some theoretical exploration. More complicated cases were taken into consideration in \citet{chiou2014multivariate}, \citet{chiou2016a}, \citet{happ2018multivariate}, in which the subjects are multivariate longitudinal and functional data. As for the functional regression models, to list a few, \citet{jiang2011functional} developed the functional single index models for longitudinal data. \citet{jiang2014inverse} extended the functional sliced inverse regression to accommodate longitudinal covariates. Both \citet{cao2015regression} and \citet{li2020regression} considered the asynchronous data setting issue, that the observation times for the response and covariates are mismatched. Besides functional principal component analysis and regression, some other affairs have also been discussed. \citet{ji2017optimal} constructed optimal designs for longitudinal and functional data by taking into account both trajectory recovery and response prediction. \citet{zhou2018efficient} proposed new estimators for mean and covariance functions using full quasi-likelihood with an adjustment of local kernel smoothing method.

Among the aforementioned methodology in functional data analysis, functional principal component analysis is a crucial technique due to the intrinsic infinite dimensional feature of functional data. Unfortunately, as pointed out by \citet{kraus2012dispersion} and \citet{zhong2020cluster}, the inferences through functional methodology, especially functional principal component analysis, may be implausible when the data deviate from Gaussian assumption. Therefore, some robust estimators for principal components have been explored in the existing literatures. \citet{locantore1999robust} introduced the spherical principal component analysis, the intuition of which is projecting the data onto the unit sphere. Following this idea, \citet{gervini2008robust} further developed the spherical principal components based on a newly defined functional median. In addition, theoretical properties and robustness of their estimators were also involved in. More asymptotic results were given in \citet{boente2019the}. \citet{bali2011robust} proposed several robust function principal component estimators through a projection-pursuit approach. \citet{boente2015s} suggested the S-estimators for functional principal components using robust scale estimates. Although the Gaussian assumption is not essential any more for the above results, some distributional constraints are still imposed, such as symmetric or elliptical. On the other hand, they merely considered densely observed functional data and the measurement error was scarcely involved. To the best of our knowledge, only \citet{hall2008modelling} discussed non-Gaussian principal component estimates for longitudinal data. They proposed a latent Gaussian model, for which a known link function is required. Consequently, some implicit constraints on distribution are also imposed.

The goal of this paper is to provide robust estimators for functional principal components, which can be distribution-free and accommodate to longitudinal data. For this purpose, we propose a Kendall functional principal component analysis approach. To be specific, a novel Kendall's $\tau$ function is defined, which is inspired by the Kendall's $\tau$ correlation coefficient in \citet{kendall1938a} and  the spatial sign covariance function in \citet{gervini2008robust} and \citet{boente2019the}. The Kendall's $\tau$ function is further shown to have the same eigenspace as the population covariance function, with no need of assumption of symmetric distribution which is crucial for \citet{gervini2008robust} and \citet{boente2019the}. For the computation, local linear estimate for Kendall's $\tau$ function is introduced, then the estimations of the functional principal components are followed afterward. As the rank information is utilized, robustness can be achieved. It coincides with our simulation results which indicate that our method performs well even under heavy-tailed or skewed distribution for both dense and sparse designs. Moreover, we also explore the asymptotic consistency properties of our estimators in detail.

The contributions of this paper are summarized as follow. First, we proposed the Kendall's $\tau$ function whose principal components are identical to those of covariance function. Compared with covariance function, Kendall's $\tau$ function is less likely to be affected by the violation of Gaussian assumption, so that even heavy-tailed and skewed cases can be handled. Second, we introduced an estimating procedure for Kendall's $\tau$ function that suits for longitudinal data. Thereby the principal component estimators are subsequently obtained from the estimated Kendall's $\tau$ function. Third, we confirmed the asymptotic consistency for the estimated Kendall's $\tau$ function and principal components through careful theoretical analysis.

\section{ Methodology }\label{KFPCA}

\subsection{Kendall Functional Principal Component Analysis}\label{seckendfun}

Let $X(t)$ be a random process in $L^2(\mathcal{T})$ with mean function $\mu(t) = E\{X(t)\}$ and covariance function $\Sigma(s, t) = \mbox{cov}\{X(s), X(t)\}$, where $\mathcal{T}$ is a bounded and closed interval. The Mercer's Theorem provides spectral decomposition of functional version for $\Sigma(s, t)$, that is $\Sigma(s, t) = \sum_{k = 1}^{\infty} \lambda_k \phi_k(s) \phi_k(t)$, where $\lambda_1 > \lambda_2 > \cdots$ are the eigenvalues and $\phi_k$'s are the corresponding eigenfunctions which form an orthogonal basis on $L^2(\mathcal{T})$. Further, $X(t)$ admits the expansion $X(t) = \mu(t) + \sum_{k = 1}^{\infty} \xi_k \phi_k(t)$, where the uncorrelated random coefficients $\xi_k, k = 1, 2, \ldots$ are functional principal component scores with mean zero and variance $\lambda_k$. The expansion refers to as the well-known Karhunen-Lo\`{e}ve Theorem and indicates that the data-driven basis $\{\phi_k\}$ is optimal in the sense of mean squared error. In addition, the eigenfunctions, especially the first few, possess the capacity of characterizing various variation modes of the random function $X$ \citep{hall2006properties}. Due to the merits above, estimation of the eigenfunctions becomes the primary task in functional principal component analysis and is of great interest.

The standard functional principal component analysis approach is based on solving the following eigenequations
\begin{align}
\int_{\mathcal{T}} \Sigma(s, t) \phi_k(t) dt = \lambda_k \phi_k(s), k = 1, 2, \ldots. \nonumber
\end{align}
Thus, an efficient estimation of the covariance function $\Sigma(s, t)$ turns to be the key issue in some articles related to functional principal component analysis \citep{lin2020mean, wang2020low}. However, covariance structure is unlikely to be accurately inferred in non-Gaussian circumstances \citep{kraus2012dispersion}. In order to improve estimating effectiveness for non-Gaussian cases, we conduct our analysis in a different way and propose a more flexible approach.

To start with, \citet{gervini2008robust} introduced a weighted covariance function (called spatial sign covariance function) which can be regarded as the covariance function of the centered curve after projecting onto the unit sphere. They demonstrated that the weighted covariance function shares the same eigenfunctions with the population covariance function. Nevertheless, this assertion was established under the assumption of symmetric distribution. Motivated by their work, we define the Kendall's $\tau$ function as
\begin{align}
K(s, t) = E\Big [ \frac{\{X(s) - \wt{X}(s)\}\{X(t) - \wt{X}(t)\}}{\|X - \wt{X}\|^2} \Big ], \nonumber
\end{align}
where $\wt{X}$ is an independent copy of $X$ and $\|X - \wt{X}\|^2 = \int \{X(u) - \wt{X}(u)\}^2 du$. Theorem \ref{theorykendall} in Section \ref{proKend} shows that the Kendall's $\tau$ function $K(s, t)$ also has the same eigenspace as $\Sigma(s, t)$, but without any distributional constraints. Properties of $K(s, t)$ are carefully discussed in Section \ref{proKend}.

The Kendall's $\tau$ function involves the comparison of two subjects, $X$ and $\wt{X}$, while \citet{gervini2008robust} considered contrasting $X$ with its functional median. In fact, this is the pivotal reason for the superiority of Kendall's $\tau$ function, of which the construction borrows the idea of Kendall's $\tau$ correlation coefficient. More intuition can be obtained from our proof of Theorem \ref{theorykendall} in the Supplementary Material. Moreover, while \citet{gervini2008robust} assumed a finite $p$-component model, our results are even suitable for the infinite series expansions.

Consequently, Kendall functional principal component analysis considers the following eigenequations,
\begin{align}
\int_{\mathcal{T}} K(s, t) \phi_k(t) dt = \lambda_k^{*} \phi_k(s), k = 1, 2, \ldots, \nonumber
\end{align}
where $\phi_k$'s are the targeted eigenfunctions and $\lambda_k^{*}$ is the $k$th eigenvalue of $K(s, t)$ which may not be equivalent to $\lambda_k$. Therefore, the estimation of $K(s, t)$ plays a crucial role in implementing our methods and the estimation procedure is thoroughly investigated in Section \ref{estimation}.

\subsection{Estimation}\label{estimation}

To fit in with the characteristics of the longitudinal data, an extended model with additional measurement error is introduced. Denote $X_i$'s as the independent realizations of $X$. Let $t_{ij} \in \mathcal{T}, i = 1, \ldots, N, j = 1, \ldots , m_i$ be the $j$th observation time for the $i$th subject, where $N$ is the sample size and $m_i$ is the observation size for the $i$th subject. We assume that
\begin{align}\label{model}
Y_{ij} = X_i(t_{ij}) + \epsilon_{ij},
\end{align}
where $Y_{ij}$ is the observation of the $i$th subject at $t_{ij}$ with independent measurement error $\epsilon_{ij}$. The errors are independent of $X_i$'s and identically distributed with $E (\epsilon_{ij}) = 0$ and $\mbox{var}(\epsilon_{ij}) = \sigma^2$. Model (\ref{model}) is commonly used in the study for both longitudinal and functional data, and based on the Karhunen-Lo\`{e}ve expansion, we have
\begin{align}\label{modelkl}
Y_{ij} = \mu(t_{ij}) + \sum_{k = 1}^{\infty} \xi_{ik} \phi_k(t_{ij}) + \epsilon_{ij},
\end{align}
where $\xi_{ik}$ is the $k$th functional principal component score for the $i$th subject.

Starting from (\ref{model}), we apply local linear smoother approach to handle the estimation of $K(s, t)$. It is known that local linear smoother tends to pool all data across subjects together to achieve satisfactory estimates, thus is widely used in estimating the mean function and covariance function when Model (\ref{model}) is under consideration \citep{yao2005functional, li2010uniform}. However, the application of local linear smoother in estimating $K(s, t)$ is nontrivial as it involves $X$ and $\wt{X}$ simultaneously. In truth, Model (\ref{model}) implies that each sample measured discretely and the observation time points are distinct from each other. The irregularity and the possible sparseness challenge the estimating procedure.

Let $k(\cdot)$ be a kernel function on $[-1, 1]$ and $k_{h^{'}}(t) = k(t/h^{'})/h^{'}$, where $h^{'}$ is the bandwidth. Define the ``raw" Kendall's $\tau$ covariances as
\begin{align}\label{rawKend}
K_i(t_{ik}, t_{il}) = \frac{1}{N-1} \sum_{j \neq i} \Big [ \frac{\big \{ Y_{ik} - \wh{X}_{j}(t_{ik}) \big \} \big \{ Y_{il} - \wh{X}_{j}(t_{il}) \big \}}{\sum_{q = 1}^{m_i} \{ Y_{iq} - \wh{X}_j(t_{iq})\}^2 / m_i} \Big ],
\end{align}
where
\begin{align}
\wh{X}_{j}(t_{ik}) = \frac{\sum_q k_{h^{'}}(t_{ik} - t_{jq}) Y_{jq}}{\sum_q k_{h^{'}}(t_{ik} - t_{jq})}. \nonumber
\end{align}
Then taking $K_i(t_{ik}, t_{il})$ as the input data for the local linear smoothing. More specifically, we endeavor to minimize
\begin{align}\label{loclin}
\sum_{i = 1}^N \sum_{1 \leq k \neq l \leq m_i} \kappa_h \big ( t_{ik} - s \big ) \kappa_h \big ( t_{il} - t \big ) \{ K_i(t_{ik}, t_{il}) - \beta_0 - \beta_{11}(s - t_{ik}) - \beta_{12}(t - t_{il})\}^2
\end{align}
with respect to $\beta_0$, $\beta_{11}$ and $\beta_{12}$, where $\kappa_h \big ( t \big ) = \kappa \big ( t/h \big )/h $, $\kappa(\cdot)$ is a kernel function and $h$ is the bandwidth. We further define $K_0(s, t) = |\mathcal{T}| K(s, t)$, where $|\mathcal{T}|$ represents the length of the interval $\mathcal{T}$, then $\wh{\beta}_0$ is the estimate of $K_0(s, t)$, so we also denote it as $\wh{K}_0(s, t)$. We only consider the pairs $(t_{ik}, t_{il})$ for $k \neq l$ in (\ref{loclin}) to avoid the influence of the measurement error on the diagonal.

\begin{remark}\label{remarkrawkend}
For the construction of the ``raw" Kendall's $\tau$ covariances $K_i(t_{ik}, t_{il})$ in (\ref{rawKend}), the estimation of $X_j(t_{ik}), j \neq i$ is drawn to settle the irregularity issue. To estimate $X_j(t_{ik})$, we introduce a weighted average of $\{Y_{j1}, \ldots, Y_{jm_j}\}$ based on the local kernel weighting techniques, the underlying idea of which is offering more weight to the observations close in time while assigning less weight to the distant observations. The kernel function $k(\cdot)$ with support $[-1, 1]$ is employed to the estimation. In practical, for the extremely sparse data, there may be no observation falls within the window width, then the estimation of $X_j(t_{ik})$ is infeasible for some $j$. In this case, we directly neglect the comparison of the certain $j$th sample with the $i$th sample in the computation of $K_i(t_{ik}, t_{il})$. Therefore, the bandwidth $h^{'}$ is observation size related, which can neither be very large to bring inaccuracy, nor very small to eliminate too much comparisons.
\end{remark}

\begin{remark}\label{remarkloclin}
By minimizing (\ref{loclin}), we obtain the estimate of $K_0(s, t)$ rather than the Kendall's $\tau$ function $K(s, t)$. However, $K_0(s, t)$ and $K(s, t)$ possess the same eigenfunctions since they differ only by a constant multiple. Hence, it is plausible to estimate the eigenfunctions via $\wh{K}_0(s, t)$ subsequently, more computational details for eigenfunctions are stated later. In addition, we adopt the generalized cross-validation method for the selection of the bandwidth $h$ in (\ref{loclin}).
\end{remark}

From the discussion in Remark \ref{remarkloclin}, we know that $\phi_k$'s are exactly the eigenfunctions of $K_0(s, t)$. We further denote the $k$th eigenvalue of $K_0(s, t)$ as $\rho_k, k = 1, 2, \ldots$. Then given $\wh{K}_0(s, t)$, the estimated eigenfunctions $\wh{\phi}_k$'s are obtained from the eigenequations
\begin{align}\label{esteigeneq}
\int_{\mathcal{T}} \wh{K}_0(s, t) \wh{\phi}_k(t) dt = \wh{\rho}_k \wh{\phi}_k(s), k = 1, 2, \ldots,
\end{align}
where $\wh{\rho}_k$ is the estimate of $\rho_k$, $\wh{\phi}_k$ satisfies $\int_{\mathcal{T}} \wh{\phi}_k (t)^2 dt = 1$ and $\int_{\mathcal{T}} \wh{\phi}_k(t) \wh{\phi}_j(t) dt = 0$ for any $j \neq k$. Refer to \citet{ramsay2005functional}, the approximate solutions of eigenequations (\ref{esteigeneq}) can be acquired by discretizing $\wh{K}_0(s, t)$.

\section{Theory}\label{theory}

\subsection{Properties of Kendall's $\tau$ function}\label{proKend}

In this section, we discuss properties of the Kendall's $\tau$ function $K(s, t)$ in population view. Recall the Karhunen-Lo\`{e}ve expansion
\begin{align}\label{KL}
X(t) = \mu(t) + \sum_{k = 1}^{\infty} \xi_k \phi_k(t),
\end{align}
which is mentioned in Section \ref{seckendfun}. We do not impose any distributional assumptions for $\xi_k, k = 1, 2, \ldots$ here. Therefore, the following attractive properties hold even for those distributions with peculiar characteristics, such as heavy-tailed and skewed.

\begin{theorem}\label{theorykendall}
Denote the $k$th eigenfunction of Kendall's $\tau$ function $K(s, t)$ as $\phi_k^{*}$. Under (\ref{KL}), we have $\phi_k^{*} = \phi_k, k = 1, 2, \ldots$. Additionally, the eigenvalues $\{\lambda_k^{*}\}_{k = 1}^{\infty}$ of $K(s, t)$ have the same order as $\{\lambda_k\}_{k = 1}^{\infty}$, which means $\lambda_1^{*} > \lambda_2^{*} > \cdots$, if $\lambda_1 > \lambda_2 > \cdots$.
\end{theorem}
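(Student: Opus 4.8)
The plan is to work entirely in the Karhunen--Lo\`{e}ve basis $\{\phi_k\}$ and reduce the statement about $K(s,t)$ to a statement about a single coefficient matrix. Writing $\eta_k = \xi_k - \wt\xi_k$ for the score of the difference $X - \wt X$, expansion (\ref{KL}) gives $X(s) - \wt X(s) = \sum_k \eta_k \phi_k(s)$ (the mean cancels), and orthonormality of the $\phi_k$ yields $\|X - \wt X\|^2 = \sum_k \eta_k^2$. Substituting into the definition of $K$ and interchanging expectation with the double sum, I obtain $K(s,t) = \sum_{j,k} A_{jk}\,\phi_j(s)\phi_k(t)$ with $A_{jk} = E[\eta_j\eta_k/\sum_m \eta_m^2]$. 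In this representation the claim $\phi_k^{*} = \phi_k$ is equivalent to showing $(A_{jk})$ is diagonal, the eigenvalues are then $\lambda_k^{*} = A_{kk} = E[\eta_k^2/\sum_m\eta_m^2] \ge 0$, and the second assertion becomes the monotonicity statement that $\lambda_j > \lambda_k$ forces $A_{jj} > A_{kk}$. I would first record the integrability facts $|A_{jk}| \le \tfrac12(A_{jj}+A_{kk})$ and $\sum_k A_{kk} = E[\sum_k\eta_k^2/\sum_m\eta_m^2] = 1$, which bound the terms and justify the interchange by dominated convergence.

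For the diagonalization the key observation---and the source of the advantage over \citet{gervini2008robust}---is that $\eta_k = \xi_k - \wt\xi_k$ is the difference of two i.i.d. copies, hence automatically symmetric about $0$, with no symmetry hypothesis on $X$ itself. Using independence of the scores, the map that flips the sign of the single coordinate $\eta_j$ (equivalently, interchanges $\xi_j$ and $\wt\xi_j$) preserves the joint law of $(\eta_1,\eta_2,\dots)$ while sending $\eta_j\eta_k \mapsto -\eta_j\eta_k$ and leaving $\sum_m\eta_m^2$ fixed. Hence $A_{jk} = -A_{jk} = 0$ for every $j\ne k$, so $K(s,t)=\sum_k \lambda_k^{*}\phi_k(s)\phi_k(t)$ and $\phi_k^{*}=\phi_k$. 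I want to flag that it is independence of the scores, not symmetry of their distribution, that licenses the single-coordinate sign flip: this is the precise sense in which the distributional constraint of \citet{gervini2008robust} is removed, since the difference construction re-symmetrizes each (possibly skewed) score for free.

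For the ordering I would compare two diagonal entries directly, writing $\lambda_j^{*}-\lambda_k^{*} = E[(\eta_j^2-\eta_k^2)/\sum_m\eta_m^2]$ and conditioning on the remaining mass $R = \sum_{m\neq j,k}\eta_m^2$. Standardizing $\eta_k=\sqrt{\lambda_k}\,\delta_k$, so that the $\delta_k$ are exchangeable once the scores share a common shape up to scale, and symmetrizing over the interchange $\delta_j\leftrightarrow\delta_k$, the two resulting fractions combine over a common denominator into a single expectation whose numerator factors as $(\lambda_j-\lambda_k)\bigl[2(\lambda_j+\lambda_k)\delta_j^2\delta_k^2 + (\delta_j^2+\delta_k^2)R\bigr]$. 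Since the bracket is nonnegative and strictly positive with positive probability, $\lambda_j^{*}-\lambda_k^{*}$ inherits the sign of $\lambda_j-\lambda_k$, which gives the asserted order.

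I expect this last step to be the main obstacle. The sign-flip argument for the off-diagonal entries is essentially forced once the difference structure is in hand, but the ordering rests on a genuine monotonicity of $E[\eta_k^2/\sum_m\eta_m^2]$ in $\mathrm{var}(\eta_k)$, and the clean factorization above leans on exchangeability of the standardized differences; mere inequality of variances need not be enough for scores of very different shape. To cover the general case I would fall back on the integral representation $1/\sum_m\eta_m^2 = \int_0^\infty e^{-u\sum_m\eta_m^2}\,du$, which reduces the required inequality to monotonicity of the Laplace-transform ratio $E[e^{-u\eta_k^2}]/E[e^{-u\eta_j^2}]$, and then isolate exactly which regularity on the score distributions secures that monotonicity.
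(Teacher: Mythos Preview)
Your argument follows the paper's proof almost exactly: both expand $X-\wt X$ in the Karhunen--Lo\`eve basis, kill the off-diagonal coefficients by the sign-flip symmetry $\eta_j\stackrel{d}{=}-\eta_j$ that comes for free from the i.i.d.-difference construction (this is precisely the paper's equation $E[(Z_k-\wt Z_k)(Z_j-\wt Z_j)/\|X-\wt X\|^2]=0$), and then compare eigenvalues after standardizing the differences and invoking exchangeability of $\delta_j$ and $\delta_k$. The only organizational difference is in the ordering step: the paper bounds the ratio $\lambda_i^{*}/\lambda_j^{*}$ by first loosening the two denominators in opposite directions (replacing $\lambda_j$ by $\lambda_i$ in one and $\lambda_i$ by $\lambda_j$ in the other) and then applying exchangeability to see the resulting ratio is below $1$; your route computes $\lambda_j^{*}-\lambda_k^{*}$ directly, symmetrizes, and obtains the explicit factorization $(\lambda_j-\lambda_k)\bigl[2(\lambda_j+\lambda_k)\delta_j^2\delta_k^2+(\delta_j^2+\delta_k^2)R\bigr]$, which makes the sign transparent without any intermediate inequality. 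You are also right to flag the exchangeability hypothesis explicitly---the paper's final equality (``by the exchangeability of $Z_i-\wt Z_i$ and $Z_j-\wt Z_j$ as they have the same distribution'') silently assumes the standardized scores are i.i.d., not merely uncorrelated with unit variance, so your caution there and the Laplace-transform fallback are well placed rather than a gap on your side.
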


Theorem \ref{theorykendall} illustrates the potential of $K(s, t)$ in functional principal component analysis, for eigenfunctions of $K(s, t)$ are the same as those of covariance function. Further, $K(s, t)$ is less affected by distribution of the data than covariance function and spatial sign covariance function in \citet{gervini2008robust}. Besides, more insights of the exact relationship between $\lambda_{k}^{*}$ and $\lambda_k$ can be gained from the proof of Theorem \ref{theorykendall} in the Supplementary Material.

\subsection{Consistency}

We present the asymptotic properties of $\wh{K}_0(s, t)$ and the estimated eigenfunctions $\wh{\phi}_k$'s in this section. Proofs are provided in the Supplementary Material. We first list the assumptions below.
\begin{assumption}\label{kernel}
The kernel functions $k(\cdot)$ and $\kappa(\cdot)$ are symmetric, bounded and compactly supported on $[-1, 1]$. Moreover, $\int v^2\kappa(v) dv= \nu_2$, where $\nu_2$ is a nonzero scalar.
\end{assumption}
\begin{assumption}\label{TYm}
The observation number $m_i$'s are the independent realizations of the random variable $m$, and are independent of $\big \{(t_{ij}, Y_{ij}): j = 1, \ldots, m_i \big \}$.
Assume all $m_i$'s have the same order, that is $m_i \sim O_p(M), i = 1, \ldots, N$ and $M = O(N^{\alpha}), \alpha > 0$.
\end{assumption}
\begin{assumption}\label{K0assum}
$K_0(s, t)$ is twice continuously differentiable on $\mathcal{T}^2$ and its second-order partial derivatives are all bounded.
\end{assumption}
\begin{assumption}\label{density}
Let $g(t)$ be the density function of the observation times $t_{ij}$. Assume that $g(t)$ is bounded away from zero with continuous and bounded second derivative.
\end{assumption}
\begin{assumption}\label{Xder}
$X(t)$ is smooth over $\mathcal{T}$ with bounded first derivative.
\end{assumption}
\begin{assumption}\label{deltat}
For the $i$th subject, let $\{t_1, \ldots, t_{m_i}\}$ be the equally spaced time points on $\mathcal{T}$. Assume that $t_{iq} - t_{q} = O_p(\Delta t), q = 1, \ldots, m_i, i = 1, \ldots, N$ and $\lim_{N \rightarrow \infty} \Delta t = 0$.
\end{assumption}
\begin{assumption}\label{bandwidth}
The bandwidths satisfy $\lim_{N \rightarrow \infty} h^{'} = 0$, $\lim_{N \rightarrow \infty} Mh^{'} < \infty$, $\lim_{N \rightarrow \infty} h = 0$, $\lim_{N \rightarrow \infty} NM^2h^2/\log N = \infty$ and $\lim_{N \rightarrow \infty} NM^2h^6/\log N < \infty$.
\end{assumption}
Here Assumption \ref{kernel} is a general assumption for kernel functions. Assumption \ref{TYm} requires the observation numbers to be independent of the observation times and the observations \citep{yao2005functional}. Further, it also requests the magnitude of the observation size to be some order of the sample size so that the degree of sparseness can be described \citep{li2010uniform, cai2011optimal, zhang2016from}. Assumptions \ref{K0assum}-\ref{Xder} are conventional assumptions in functional data analysis. Similar to \citet{hall2008modelling}, we also demand the observation times are approximately uniformly laid over the interval $\mathcal{T}$ in Assumption \ref{deltat}. The bandwidth conditions are stated in Assumption \ref{bandwidth}.

\begin{theorem}\label{consistency}
Under Assumptions \ref{kernel}-\ref{bandwidth}, we have
\begin{align}\label{consisform}
\sup_{s,t\in \mathcal{T}} \Big |\wh{K}_0(s, t) - K_0(s, t) \Big | = O_p \Big ( \tau_1 + \tau_2 \Big ),
\end{align}
where $\tau_1 = \big (NM^2h^2/\log N \big )^{-1/2} + h^2$ and $\tau_2 = M^{-1} + h^{'}$.
\end{theorem}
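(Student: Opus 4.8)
The plan is to write $\wh K_0(s,t)$ as the intercept $\wh\beta_0$ of the bivariate local linear fit in (\ref{loclin}) and to split its deviation from $K_0(s,t)$ into three sources: (i) the bias of the local linear smoother, (ii) its stochastic fluctuation, uniform over $(s,t)\in\mathcal T^2$, and (iii) the error committed when the ``raw'' covariances $K_i(t_{ik},t_{il})$ stand in for an idealized population building block. To organize (iii) I would introduce the oracle raw covariance
\[
K_i^{*}(u,v)=\frac{1}{N-1}\sum_{j\neq i}\frac{\{X_i(u)-X_j(u)\}\{X_i(v)-X_j(v)\}}{|\mathcal T|^{-1}\|X_i-X_j\|^2},
\]
which, since $X_i$ and $X_j$ are i.i.d.\ copies, satisfies $E\{K_i^{*}(u,v)\}=|\mathcal T|K(u,v)=K_0(u,v)$ by the definition of the Kendall function. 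Writing $K_i=K_i^{*}+(K_i-K_i^{*})$ cleanly separates the construction error from the smoothing problem for a mean-$K_0$ array.

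For the construction error I would expand $K_i-K_i^{*}$ about the oracle, using that the denominator is bounded away from $0$ so that the ratio can be linearized. The denominator discrepancy $\tfrac1{m_i}\sum_q\{Y_{iq}-\wh X_j(t_{iq})\}^2-|\mathcal T|^{-1}\|X_i-X_j\|^2$ splits into a Riemann-sum error, which is $O_p(M^{-1})$ under the approximately equally spaced design of Assumptions~\ref{density}--\ref{deltat}, and a smoothing error from replacing $\wh X_j$ by $X_j$, which is $O_p(h^{'})$ because $X$ has a bounded first derivative (Assumption~\ref{Xder}) and the Nadaraya--Watson window has width $h^{'}$. In the numerator the errors $\epsilon_{ik}$ and $\wh X_j-X_j$ enter linearly: their mean-zero cross terms (using $k\neq l$ and independence of the $\epsilon$'s) are pushed into the stochastic term, while their systematic parts again contribute $O_p(h^{'})$. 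The $\epsilon^2$-contributions to the denominator are the most delicate and must be shown to be of order $\tau_2$ (this is where the averaging in Assumption~\ref{TYm} and the lower bound on the denominator are used). Collecting these gives $\sup_{u,v}|K_i-K_i^{*}|=O_p(M^{-1}+h^{'})=O_p(\tau_2)$, which survives the weighted averaging of the local linear operator and yields the $\tau_2$ part of the bound.

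It remains to analyze the local linear smoother applied to $K_i^{*}$. I would represent $\wh\beta_0$ through the inverse of the $3\times3$ weighted moment matrix and, using Assumptions~\ref{kernel} and~\ref{density}, show the kernel moments converge uniformly to their deterministic limits. The bias comes from a second-order Taylor expansion of $K_0(t_{ik},t_{il})$ about $(s,t)$: the local linear design annihilates the first-order term and Assumption~\ref{K0assum} leaves a remainder of order $h^2$, producing the $h^2$ in $\tau_1$. For the stochastic term I would center $K_i^{*}$ and bound $\sup_{s,t}$ of the weighted average by an exponential (Bernstein) inequality combined with a covering of $\mathcal T^2$ by $O(h^{-2})$ cells; the effective number of localized summands is of order $NM^2h^2$, so the variance scales like $(NM^2h^2)^{-1}$ and the union bound over the covering inflates the tail by a factor $\log N$, giving the uniform rate $(NM^2h^2/\log N)^{-1/2}$.

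The main obstacle is this last uniform stochastic bound, for two reasons. First, the summands $K_i^{*}(t_{ik},t_{il})$ are not independent across $i$: every $K_i^{*}$ averages the same pool $\{X_j\}_{j\neq i}$, so the array carries a two-sample, U-statistic-type dependence that must be resolved before any concentration inequality applies, e.g.\ by a Hoeffding-style projection or by conditioning on the common pool and treating the residual cross-covariances (of order $N^{-1}$) separately. Second, securing the sharp $\log N$ factor rather than a crude polynomial bound forces the truncation and covering to be matched to Assumption~\ref{bandwidth}; in particular $NM^2h^2/\log N\to\infty$ makes the stochastic term vanish while $NM^2h^6/\log N<\infty$ keeps the $h^2$ bias from dominating. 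Combining the three pieces then gives $\sup_{s,t\in\mathcal T}|\wh K_0(s,t)-K_0(s,t)|=O_p(\tau_1+\tau_2)$.
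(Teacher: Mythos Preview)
Your proposal is correct and follows essentially the same route as the paper: the same oracle raw covariance $K_i^{*}$ (the paper's $K_i^{I}$), the same $O_p(M^{-1}+h')$ construction-error bound obtained by a Riemann-sum lemma for the denominator together with the $O(h')$ Nadaraya--Watson bias from Assumption~\ref{Xder}, and the same local-linear bias--variance decomposition (explicit $S_{pq}$/$K_{pq}$ calculus in the paper, your ``$3\times 3$ moment matrix'' description) producing $\tau_1$. The one place you are more careful than the paper is the U-statistic dependence of the $K_i^{*}$ across $i$: the paper simply appeals to classical uniform kernel-smoother rates (Hansen, Masry, Li) on the centered array $\epsilon_{ikl}^{*}=K_i^{I}-K_0+E_{ikl}$ without isolating the cross-sample coupling, whereas you correctly flag that a Hoeffding-type projection (or conditioning) is what makes that citation legitimate.
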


\begin{theorem}\label{eigenfunction}
Under Assumptions \ref{kernel}-\ref{bandwidth}, we have
\begin{align}\label{eigsup}
\sup_{t \in \mathcal{T}} \big | \wh{\phi}_k(t) - \phi_k(t) \big | = O_p \Big ( \tau_1 + \tau_2 \Big ),
\end{align}
where $\tau_1 = \big (NM^2h^2/\log N \big )^{-1/2} + h^2$ and $\tau_2 = M^{-1} + h^{'}$.
\end{theorem}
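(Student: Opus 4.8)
The plan is to leverage the uniform consistency of $\wh{K}_0$ established in Theorem \ref{consistency} and transfer it to the eigenfunctions through operator perturbation theory, then upgrade the resulting $L^2$ bound to a uniform bound via the eigenequation itself. First I would regard $K_0$ and $\wh{K}_0$ (symmetrized if necessary) as the kernels of self-adjoint Hilbert--Schmidt integral operators $\mathcal{K}_0$ and $\wh{\mathcal{K}}_0$ on $L^2(\mathcal{T})$. Since $\mathcal{T}$ is bounded, $\|\wh{\mathcal{K}}_0 - \mathcal{K}_0\|_{\mathrm{HS}} = \{\int_{\mathcal{T}}\int_{\mathcal{T}} |\wh{K}_0(s,t) - K_0(s,t)|^2 \, ds\, dt\}^{1/2} \leq |\mathcal{T}|\sup_{s,t}|\wh{K}_0(s,t) - K_0(s,t)|$, so Theorem \ref{consistency} immediately yields $\|\wh{\mathcal{K}}_0 - \mathcal{K}_0\|_{\mathrm{op}} \leq \|\wh{\mathcal{K}}_0 - \mathcal{K}_0\|_{\mathrm{HS}} = O_p(\tau_1 + \tau_2)$.

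Next I would invoke standard perturbation bounds for self-adjoint compact operators. By Theorem \ref{theorykendall} the population eigenvalues $\rho_k = |\mathcal{T}|\lambda_k^{*}$ are strictly ordered, so each spectral gap $\delta_k = \min_{j \neq k}|\rho_j - \rho_k|$ is strictly positive. Weyl's inequality then gives $|\wh{\rho}_k - \rho_k| \leq \|\wh{\mathcal{K}}_0 - \mathcal{K}_0\|_{\mathrm{op}} = O_p(\tau_1 + \tau_2)$, while the Davis--Kahan type bound $\|\wh{\phi}_k - \phi_k\|_{L^2} \leq 2\sqrt{2}\,\delta_k^{-1}\|\wh{\mathcal{K}}_0 - \mathcal{K}_0\|_{\mathrm{op}}$, after fixing the sign so that $\langle \wh{\phi}_k, \phi_k\rangle \geq 0$, yields the $L^2$ rate $\|\wh{\phi}_k - \phi_k\|_{L^2} = O_p(\tau_1 + \tau_2)$.

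To obtain the uniform rate claimed in (\ref{eigsup}) I would bootstrap from the eigenequations. Writing $\wh{\phi}_k(s) = \wh{\rho}_k^{-1}\int_{\mathcal{T}} \wh{K}_0(s,t)\wh{\phi}_k(t)\, dt$ and $\phi_k(s) = \rho_k^{-1}\int_{\mathcal{T}} K_0(s,t)\phi_k(t)\, dt$, and using $\int_{\mathcal{T}} K_0(s,t)\phi_k(t)\,dt = \rho_k\phi_k(s)$, I decompose
\begin{align}
\wh{\phi}_k(s) - \phi_k(s) &= \wh{\rho}_k^{-1}\int_{\mathcal{T}} \{\wh{K}_0(s,t) - K_0(s,t)\}\wh{\phi}_k(t)\, dt \nonumber \\
&\quad + \wh{\rho}_k^{-1}\int_{\mathcal{T}} K_0(s,t)\{\wh{\phi}_k(t) - \phi_k(t)\}\, dt + (\wh{\rho}_k^{-1} - \rho_k^{-1})\,\rho_k\, \phi_k(s). \nonumber
\end{align}
For the first term, Cauchy--Schwarz and $\|\wh{\phi}_k\|_{L^2} = 1$ bound it uniformly in $s$ by $\wh{\rho}_k^{-1}|\mathcal{T}|^{1/2}\sup_{s,t}|\wh{K}_0(s,t) - K_0(s,t)| = O_p(\tau_1 + \tau_2)$. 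For the second term, the boundedness of $K_0$ on the compact $\mathcal{T}^2$ (Assumption \ref{K0assum}) together with the $L^2$ eigenfunction rate gives $O_p(\tau_1 + \tau_2)$. For the third term, the eigenvalue rate, the strict positivity of $\rho_k$, and the boundedness of $\phi_k$ (which inherits continuity and boundedness from $K_0$ via its own eigenequation) again give $O_p(\tau_1 + \tau_2)$, uniformly in $s$. Since $\wh{\rho}_k \to \rho_k > 0$ in probability, the factors $\wh{\rho}_k^{-1}$ are bounded with probability tending to one, and taking the supremum over $s$ completes the argument.

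The main obstacle I anticipate is the perturbation step rather than the bootstrap: one must carefully handle the sign indeterminacy of the empirical eigenfunctions and verify that the gap constant $\delta_k$ genuinely stays bounded away from zero for each fixed $k$, which is precisely where the strict ordering $\lambda_1^{*} > \lambda_2^{*} > \cdots$ from Theorem \ref{theorykendall} is indispensable. A secondary subtlety is making explicit that $\wh{\rho}_k$ is bounded below in probability, so that the repeated factor $\wh{\rho}_k^{-1}$ does not inflate the rate; this follows from the eigenvalue consistency furnished by Weyl's inequality but should be recorded before taking suprema.
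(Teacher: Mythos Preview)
Your proposal is correct and follows essentially the same route the paper defers to: the paper simply invokes Theorem~2 of \citet{yao2005functional}, whose argument is precisely the operator-perturbation step (kernel sup-rate $\Rightarrow$ Hilbert--Schmidt rate $\Rightarrow$ $L^2$ eigenfunction rate via a Davis--Kahan type bound) followed by the eigenequation bootstrap to upgrade to a uniform rate. Your sketch makes explicit the details the paper leaves to that reference, including the sign alignment, the positivity of the spectral gap from Theorem~\ref{theorykendall}, and the need for $\wh{\rho}_k$ to be bounded below in probability.
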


\begin{remark}
From (\ref{consisform}) and (\ref{eigsup}), $\wh{K}_0(s, t)$ and $\wh{\phi}_k$'s share the same convergence rate which can be separated into two parts, $\tau_1$ and $\tau_2$. The first term $\tau_1$ is a usual convergence rate for local linear smoother \citep{li2010uniform, li2011efficient}, while the second term $\tau_2$ originates from the estimation mentioned in Remark \ref{remarkrawkend}, and the estimation error is inevitable for the irregularity and sparsity of the observation times. It has been recognized that various magnitudes of the observation size $M$ may lead to distinct asymptotic results \citep{cai2011optimal, zhang2016from}. Such feature occurs in our results with no exception. For dense data where $\alpha$ is relatively large, the convergence rates are dominated by $\tau_1$. In contrast, for a extremely small $\alpha$, which results in sparse data, the convergence rate will be somewhat slow. Further, we conduct the numerical experiments for both dense and sparse cases in Section \ref{simulation} and we delightedly find that our method outperforms the other method in both circumstances.
\end{remark}

\section{ Simulation }\label{simulation}

To illustrate the performances of our method, we execute abundant simulation studies considering various eigenfunction settings and distinct distributions for both dense and sparse design. To be specific, the data are generated from (\ref{modelkl}) with $\mu(t) = t + \sin(t)$, $\epsilon_{ij} \sim N(0, 0.1)$. We further set the eigenvalues as $\lambda_1 = 9$, $\lambda_2 = 1.5$ and $\lambda_k = 0, k \geq 3$. For the eigenfunctions, we consider the following four cases:
\begin{itemize}
\item Case 1: $\phi_1(t) = \mbox{cos}(\pi t/10)/\sqrt{5}$ and $\phi_2(t) = \mbox{sin}(\pi t/10)/\sqrt{5}, t \in [0,10]$.
\item Case 2: $\phi_1(t) = \mbox{cos}(\pi t/10)/\sqrt{5}$ and $\phi_2(t) = \mbox{cos}(\pi t/5)/\sqrt{5}, t \in [0,10]$.
\item Case 3: $\phi_1(t) = \mbox{cos}(\pi t/5)/\sqrt{5}$ and $\phi_2(t) = \mbox{sin}(\pi t/5)/\sqrt{5}, t \in [0,10]$.
\item Case 4: $\phi_1(t) = \mbox{cos}(\pi t/5)/\sqrt{5}$ and $\phi_2(t) = \mbox{sin}(2\pi t/5)/\sqrt{5}, t \in [0,10]$.
\end{itemize}
It can be realized that from Case 1 to Case 4, more volatility is provided to the eigenfunctions because the periods become shorter, leading to more tough estimation.

To demonstrate the robustness of our method to data deviating from Gaussian distribution, four distinct distributions are taken into account, by setting the component scores $\xi_{ik}$'s to follow Gaussian, mix-Gaussian \citep{yao2005functional}, EC2 \citep{han2018eca}, skew-t \citep{azzalini2014the} distributions respectively with zero mean and corresponding variance. To give more detail, EC2 distribution is an elliptical distribution with heavy-tailed feature while skew-t distribution is an asymmetric distribution with skewed feature.

The irregularity and sparsity are set up similar to \citet{yao2005functional}. Let $\{c_0, \ldots, c_{50}\}$ be the equally spaced grid over $[0, 10]$ with $c_0 = 0$ and $c_{50} = 10$. Further, let $s_i = c_i + e_i, i = 0, \ldots, 50$ with $e_i \sim N(0, 0.1)$ and if $s_i < 0$ or $s_i > 10$, we set it as 0 and 10 respectively. So far the irregularity is imposed. For the observation number, we randomly choose it from $\{8, \ldots, 12\}$ for the relatively dense design and $\{2, \ldots, 5\}$ for the sparse design. Then the observation locations are randomly chosen from $\{s_1, \ldots, s_{49}\}$ in term of the observation numbers. Here we disregard the case in which some subjects with only one observation, for those subjects bring no contribution to the estimation. Moreover, we conduct $100$ simulation runs with the sample size $N = 100$ in each run.

To compare our method with the popular method (PACE) in \citet{yao2005functional}, we consider the following criteria:
\begin{align}
\mbox{IMSE}_i &= \|\phi_i - \wh{\phi}_i\|^2 = \int_{\mathcal{T}} \{ \phi_i(t) - \wh{\phi}_i(t)\}^2 dt, i = 1, 2, \nonumber \\
\mbox{Angle}_i &= \mbox{arccos}\Big \{ \Big |\int_{\mathcal{T}} \phi_i(t) \wh{\phi}_i(t) dt \Big | \Big \}, i = 1,2, \nonumber
\end{align}
where $\mbox{IMSE}_i$ is the integrated mean squared error of $\wh{\phi}_i$ and $\mbox{Angle}_i$, which ranges from 0 to 90, is the angle between the direction of the true $\phi_i$ and the estimated $\wh{\phi}_i$. They all measure the discrepancies of the eigenfunction estimates.

The simulation results under dense design are summarized in Table \ref{simdense}. It is evident that across all the four eigenfunctions settings, the proposed method outperforms method of \citet{yao2005functional} with greatly smaller $\mbox{IMSE}_i$ and $\mbox{Angle}_i$, even under Gaussian distribution. That means more precise estimation can be achieved through Kendall functional principal component analysis. Further, we observe that the $\mbox{IMSE}_i$ and $\mbox{Angle}_i$ of Kendall functional principal component analysis under Gaussian distribution exhibit a slight increase when turning to EC2 distribution and skew-t distribution, while method by \citet{yao2005functional} endures an obvious increasing trend, which demonstrates the robustness of our approach. The above findings provide numerical evidence for illustrating the superiority of our method in functional principal component analysis and its robustness to non-Gaussian data. Moreover, a peculiar decrease of the estimation error is emerged when switching from Gaussian distribution to mix-Gaussian distribution, which can also be found in the simulation results of \citet{yao2005functional}. It is interesting that mix-Gaussian data can reach more accurate results than Gaussian data.

\begin{table}
\caption{Results of simulation studies under dense design settings}
\label{simdense}
\begin{center}
\setlength{\tabcolsep}{1mm}{
\begin{threeparttable}
\begin{tabular}{cccccccccc}
\hline
 & &\multicolumn{2}{c}{Gaussian}&\multicolumn{2}{c}{mix-Gaussian}&\multicolumn{2}{c}{EC2}&\multicolumn{2}{c}{Skew-t}\\
 & &PACE&KFPCA&PACE&KFPCA&PACE&KFPCA&PACE&KFPCA\\
\hline
\multirow{4}{*}{Case 1}&$\mbox{IMSE}_1$&0.013&0.006&0.011&0.006&0.018&0.007&0.030&0.008\\
 &$\mbox{IMSE}_2$&0.062&0.018&0.045&0.017&0.0780&0.018&0.119&0.019\\
 &$\mbox{Angle}_1$&6.198&4.091&5.653&4.111&7.034&4.396&8.425&4.493\\
 &$\mbox{Angle}_2$&11.716&7.294&11.188&7.257&13.671&7.241&16.065&7.662\\
\hline
\multirow{4}{*}{Case 2}&$\mbox{IMSE}_1$&0.015&0.008&0.013&0.007&0.025&0.007&0.024&0.009\\
 &$\mbox{IMSE}_2$&0.082&0.027&0.076&0.023&0.132&0.030&0.132&0.027\\
 &$\mbox{Angle}_1$&6.460&4.844&6.182&4.482&8.112&4.569&8.161&4.978\\
 &$\mbox{Angle}_2$&15.455&8.963&14.460&8.341&18.590&9.377&18.521&9.144\\
\hline
\multirow{4}{*}{Case 3}&$\mbox{IMSE}_1$&0.017&0.009&0.014&0.007&0.026&0.011&0.034&0.008\\
 &$\mbox{IMSE}_2$&0.109&0.025&0.079&0.025&0.172&0.031&0.206&0.026\\
 &$\mbox{Angle}_1$&7.236&5.081&6.551&4.608&8.779&5.414&9.947&4.848\\
 &$\mbox{Angle}_2$&16.888&8.801&15.320&8.676&20.549&9.576&22.498&8.825\\
\hline
\multirow{4}{*}{Case 4}&$\mbox{IMSE}_1$&0.019&0.008&0.015&0.006&0.028&0.008&0.031&0.007\\
 &$\mbox{IMSE}_2$&0.184&0.046&0.198&0.041&0.430&0.052&0.426&0.053\\
 &$\mbox{Angle}_1$&7.488&4.874&6.695&4.297&8.778&5.019&9.221&4.633\\
 &$\mbox{Angle}_2$&22.757&11.789&23.207&10.986&33.464&12.508&34.392&12.606\\
\hline
\end{tabular}
\begin{tablenotes}
\footnotesize
\item[] PACE, principal component analysis through conditional expectation \citep{yao2005functional}.
\item[] KFPCA, the proposed Kendall functional principal component analysis.
\end{tablenotes}
\end{threeparttable}}
\end{center}
\end{table}

Table \ref{simsparse} shows the simulation results under sparse design for various distributions across the four eigenfunctions settings. For the sparse design, the $\mbox{IMSE}_i$ and $\mbox{Angle}_i$ of both methods show a significant growth compared with those in the dense design, as a result of less information contained in the former. As for the comparison of these two methods, we can get similar conclusions from Table \ref{simsparse} as the dense design. To be specific, our method leads to slightly better estimates for the first eigenfunction, while substantial improvements are gained for the estimation of the second eigenfunction. In contrast with the first eigenfunction, some deterioration in estimating the second eigenfunction is shown for both methods. We conjecture that the reason lies in the extreme sparseness, which makes the estimation for the higher-order eigenfunctions much tough. Furthermore, the robustness of Kendall functional principal component analysis can be revealed in the same way as the dense design.

\begin{table}
\caption{Results of simulation studies under sparse design settings}
\label{simsparse}
\begin{center}
\setlength{\tabcolsep}{1mm}{
\begin{threeparttable}
\begin{tabular}{cccccccccc}
\hline
 & &\multicolumn{2}{c}{Gaussian}&\multicolumn{2}{c}{mix-Gaussian}&\multicolumn{2}{c}{EC2}&\multicolumn{2}{c}{Skew-t}\\
 & &PACE&KFPCA&PACE&KFPCA&PACE&KFPCA&PACE&KFPCA\\
\hline
\multirow{4}{*}{Case 1}&$\mbox{IMSE}_1$&0.030&0.020&0.027&0.019&0.057&0.024&0.051&0.026\\
 &$\mbox{IMSE}_2$&0.247&0.010&0.251&0.102&0.296&0.113&0.300&0.129\\
 &$\mbox{Angle}_1$&9.054&7.373&8.511&7.234&11.852&8.096&11.232&8.442\\
 &$\mbox{Angle}_2$&25.436&16.649&25.097&16.952&27.735&17.470&27.185&19.138\\
\hline
\multirow{4}{*}{Case 2}&$\mbox{IMSE}_1$&0.036&0.023&0.025&0.029&0.047&0.032&0.051&0.036\\
 &$\mbox{IMSE}_2$&0.313&0.212&0.260&0.237&0.468&0.292&0.432&0.274\\
 &$\mbox{Angle}_1$&9.678&7.965&8.107&9.039&11.062&9.438&11.512&9.937\\
 &$\mbox{Angle}_2$&28.997&23.397&27.253&25.445&36.680&28.040&35.713&26.912\\
\hline
\multirow{4}{*}{Case 3}&$\mbox{IMSE}_1$&0.055&0.043&0.047&0.039&0.097&0.058&0.111&0.060\\
 &$\mbox{IMSE}_2$&0.512&0.184&0.429&0.190&0.553&0.220&0.530&0.229\\
 &$\mbox{Angle}_1$&12.619&11.301&11.757&10.717&16.311&12.966&17.034&13.021\\
 &$\mbox{Angle}_2$&38.519&22.623&34.593&23.099&39.885&25.037&39.776&25.821\\
\hline
\multirow{4}{*}{Case 4}&$\mbox{IMSE}_1$&0.062&0.039&0.038&0.032&0.090&0.045&0.098&0.046\\
 &$\mbox{IMSE}_2$&0.892&0.664&0.890&0.678&1.039&0.763&1.004&0.681\\
 &$\mbox{Angle}_1$&13.181&10.614&10.590&9.696&15.658&11.432&16.375&11.385\\
 &$\mbox{Angle}_2$&54.662&46.060&54.715&45.870&59.781&49.546&58.799&46.877\\
\hline
\end{tabular}
\begin{tablenotes}
\footnotesize
\item[] PACE, principal component analysis through conditional expectation \citep{yao2005functional}.
\item[] KFPCA, the proposed Kendall functional principal component analysis.
\end{tablenotes}
\end{threeparttable}}
\end{center}
\end{table}

\section{ Real Data Analysis }\label{real}

In this section, we apply the proposed method on the longitudinal CD4 cell count data from a human immunodeficiency virus (HIV) study for patients with AIDS \citep{wohl2005cytomegalovirus}. The data set can be obtained from \citet{cao2015regression}. As we know, CD4 cells play an important role in our immune system. However, CD4 cells will be attacked and destroyed when infected HIV. Therefore, the number or percentage of CD4 cells reflect the disease progression for the HIV infected person, and then become a usual health status marker. There were $190$ subjects enrolled in this AIDS study from June 1997 to January 2002. Nevertheless, because of the occurrences of missing visits and the randomness of HIV infections, the data evidently show some sparse and irregular features. Figure \ref{designplot}(a) is the design plot for the entire data. Although the data are not required to be dense, the pooled data from all subjects should roughly fill the domain of $\mathcal{T} \times \mathcal{T}$. It is obvious that the assembled pairs $(t_{ij}, t_{ik}), i = 1, \ldots, N, j \neq k$ are extremely sparse in the first $400$ days, making both the proposed method and method by \citet{yao2005functional} unstable. Therefore, the measurements of the first $400$ days are neglected in the following analysis and the design plot after adjustment is given in Figure \ref{designplot}(b). Moreover, the frequencies of different observation numbers are listed in Table \ref{obsfreq}. As we only consider the data after the first $400$ days, one subject is automatically eliminated for he/she did not have any record afterwards.

\begin{figure}[htbp]
\centering
\includegraphics[width=\textwidth]{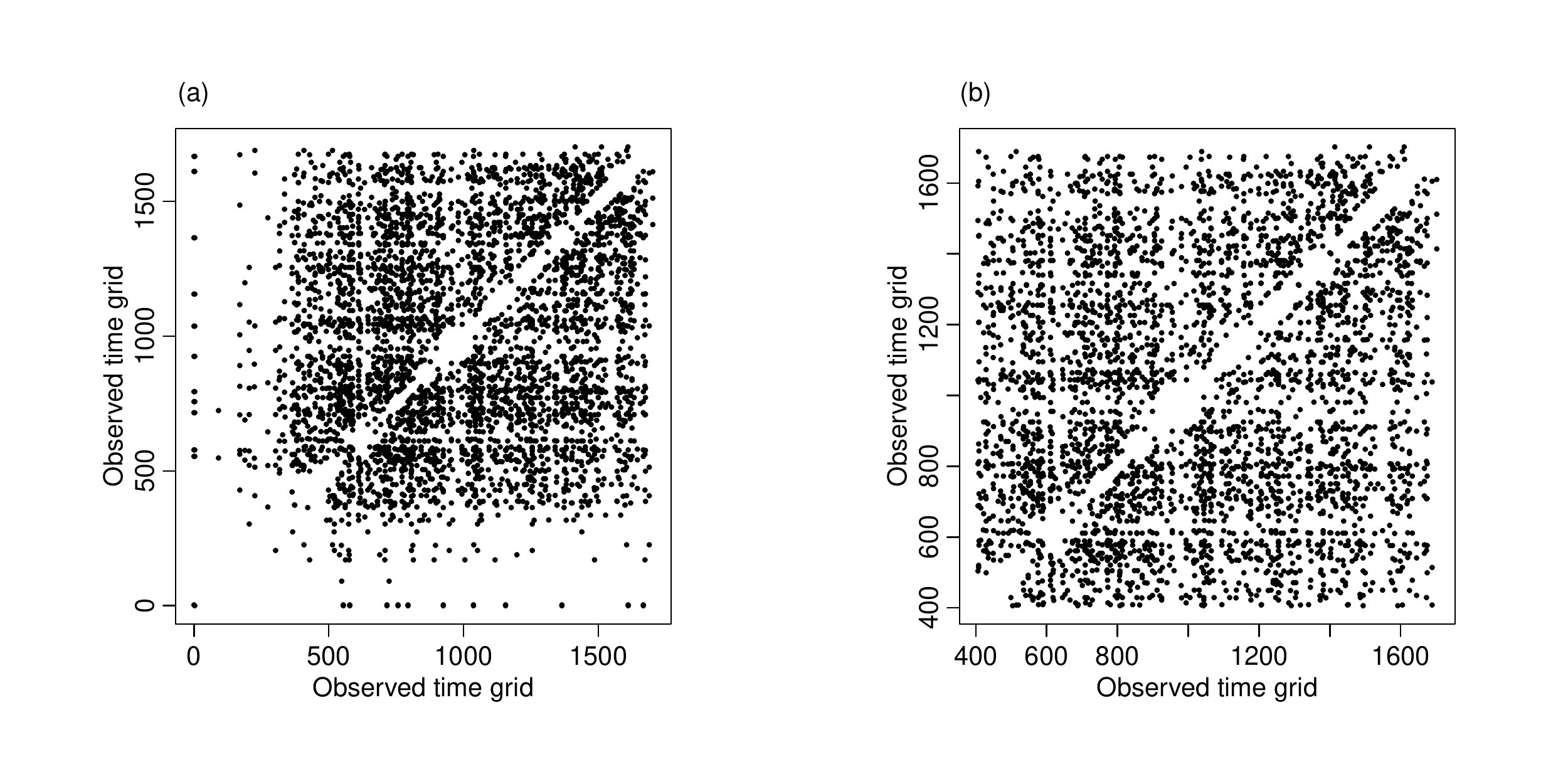}\\
\caption{(a) Design plot for the entire CD4 count data. (b) Design plot for the CD4 count data after the first $400$ days. Design plot is the plot for the assembled pairs $(t_{ij}, t_{ik})$, where $i = 1, \ldots, N, j \neq k$.}
\label{designplot}
\end{figure}

\begin{table}[htbp]
\caption{The frequency of different observation numbers for the CD4 count data.}
  \label{obsfreq}
  \begin{center}
  \begin{tabular}{cc|cc}
  \hline
  Observation Number&Frequency&Observation Number&Frequency \\
  \hline
  1&44&8&8\\
  2&39&9&5\\
  3&29&10&3\\
  4&25&11&2\\
  5&12&12&2\\
  6&9&13&2\\
  7&8&15&1\\
  \hline
  \end{tabular}
  \end{center}
\end{table}

Refer to \citet{cao2015regression}, we first log-transformed the CD4 count data. The kernel density estimate of the log-transformed CD4 count data is shown in Figure \ref{densityplot}. The visible heavy-tailed and skewed features indicate a severely deviation from Gaussian distribution. That means the classical method may not be appropriate for the analysis. Before implementing our method, we estimate the mean function for the log-transformed CD4 count data using local linear smoother. The estimated mean function is displayed in Figure \ref{meaneig}(a), which exhibits a roughly increasing trend of the CD4 cell count. Figures \ref{meaneig}(b)(c) give the estimation of the first two eigenfunctions through Kendall functional principal component analysis. We find that the first eigenfunction approximately reveals the increasing trend of the mean function at the early stage, while the second eigenfunction highlights the contrast between the early stage and the late stage.

\begin{figure}[htbp]
\centering
\includegraphics[width=0.6\textwidth]{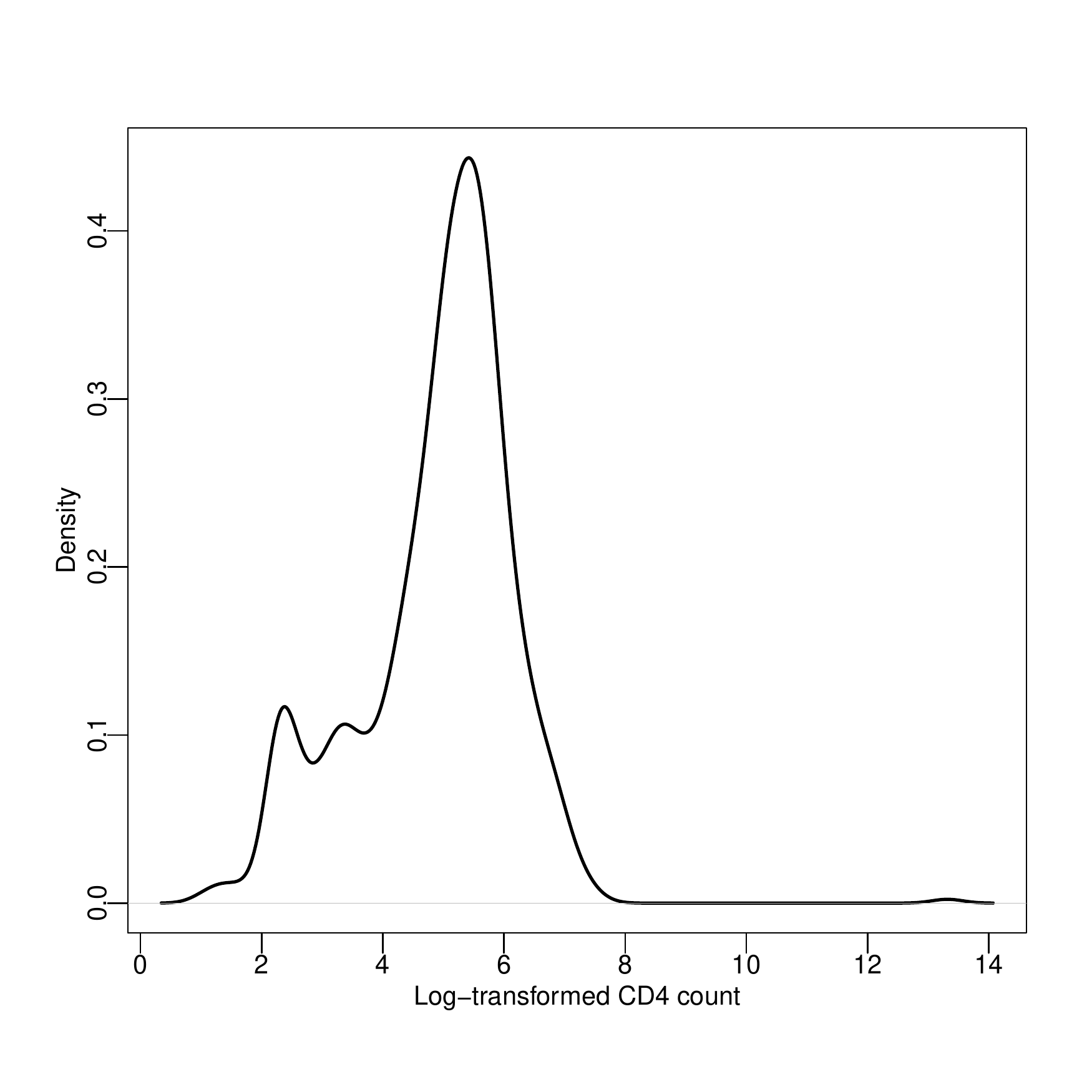}\\
\caption{The kernel density estimate of the log-transformed CD4 count data.}
\label{densityplot}
\end{figure}

\begin{figure}[htbp]
\centering
\includegraphics[width=\textwidth]{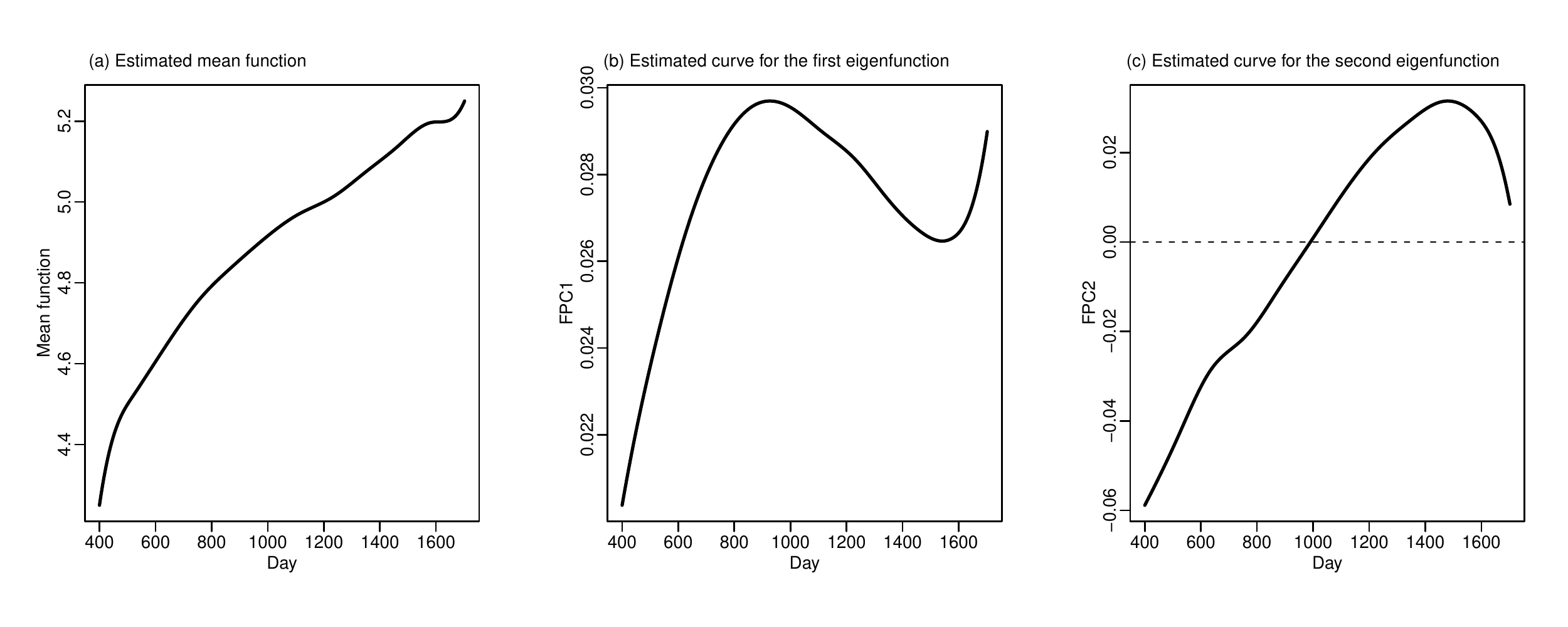}\\
\caption{(a) The estimated mean function for the log-transformed CD4 count data using local linear smoother. (b) The estimation of the first eigenfunction for the log-transformed CD4 count data using the proposed method. (c) The estimation of the second eigenfunction for the log-transformed CD4 count data using the proposed method.}
\label{meaneig}
\end{figure}

To compare our method with \citet{yao2005functional} in a quantitative way, we consider the prediction of the log-transformed CD4 count trajectories. For Kendall functional principal component analysis, the estimated component scores are obtained using the weighted least squares method \citep{chiou2014multivariate}. Then through truncating the infinite series in the Karhunen-Lo\`{e}ve expansion at $L$, the trajectory predictions can be achieved by plugging in the estimated mean function, eigenfunctions and component scores. For comparison, we randomly divide the data set into training set and test set, with $80\%$ data in the former and the remaining $20\%$ in the latter. Denote $\wh{X}_{i}(t_{ij})$ as the prediction for the $i$th subject at the $j$th time point by taking $L = 2$. We define the mean square error between $\wh{X}_{i}(t_{ij})$ and the observed $Y_{ij}$ as
\begin{align}
\frac{1}{N} \sum_{i = 1}^N \frac{1}{m_i} \sum_{j = 1}^{m_i} \big \{ Y_{ij} - \wh{X}_i(t_{ij}) \big \}^2, \nonumber
\end{align}
where $m_i$ is the observation number for the $i$th subject and $N$ is the size of the considered sample. Here for the identifiability of the weighted least squares method, we only consider the sample with more than two observations when computing the mean square error. Table \ref{realMSE} shows the average mean square error of these two methods in both training set and test set for $100$ runs. We reset the training set and test set in each run. It is obvious that our method gains much smaller average mean square error than \citet{yao2005functional} in both training set and test set, which proves the superiority of Kendall functional principal component analysis in dealing with non-Gaussian data in practical.

\begin{table}[htbp]
\caption{Average mean square error in both training set and test set for 100 runs, with the standard errors in parentheses}
  \label{realMSE}
  \begin{center}
  \begin{tabular}{ccc}
  \hline
   &PACE&KFPCA\\
  \hline
  Training set&0.2370(0.0515)&0.1187(0.0132)\\
  Test set&0.3361(0.1104)&0.1273(0.0404)\\
  \hline
  \end{tabular}
  \end{center}
\end{table}

\section{ Conclusion and Discussion }\label{discussion}

Here are some possible extensions to the current work. First, we notice that multivariate functional data and multidimensional functional data are of great interest recent years \citep{happ2018multivariate, wang2020low, virta2020independent}. We only consider univariate functional data in this paper and haven't involved those more complex situations yet. We suppose the generalization of our method to the multivariate/multidimensional functional data is direct and easy to be realized. Second, as pointed out in Remark \ref{remarkrawkend}, we may neglect some comparisons in computing the ``raw" Kendall's $\tau$ covariances. These neglects happen frequently when facing extremely sparse data or imbalanced data for which the observations of distinct subjects are either too sparse or intensely dense. Therefore, some enhancements can be imposed on the estimation procedure for those intractable data. Third, only the eigenfunctions of the Kendall's $\tau$ function are taken into account and more exploration of its eigenvalues may be helpful in some directions. To give an example, the selection of the number of principal components is a long-standing problem in functional principal component analysis. The most convenient way is to use the eigenvalues of the covariance function. Here investigating whether the eigenvalues of Kendall's $\tau$ function can be employed and give a more robust choice is of significance.

\appendix
\section{ Supplementary Material }

\subsection{Proof of Theorem \ref{theorykendall}}

\begin{proof}[Proof of Theorem \ref{theorykendall}]

Let $K$ be the operator with the kernel $K(s, t)$ and $\otimes$ be the tensor product on the functional space $H$, then
\begin{align}
K = E \Big \{ \frac{(X - \wt{X}) \otimes (X - \wt{X})}{\|X - \wt{X}\|^2} \Big \}. \nonumber
\end{align}
As
\begin{align}
X(t) &= \mu(t) + \sum_{k = 1} Z_k \lambda_k^{1/2} \phi_k(t), \nonumber \\
\wt{X}(t) &= \mu(t) + \sum_{k = 1} \wt{Z}_k \lambda_k^{1/2} \phi_k(t), \nonumber
\end{align}
we have $X(t) - \wt{X}(t) = \sum_{k =1} (Z_k - \wt{Z}_k) \lambda_k^{1/2} \phi_k(t)$, where $Z_k$ and $\wt{Z}_k, k = 1, 2, \ldots$ are independent with zero mean and unit variance. Further, we can get
\begin{align}
&\big \{ (X - \wt{X}) \otimes (X - \wt{X}) \big \} \phi_k \nonumber \\
=& \big \{ \sum_j (Z_j - \wt{Z}_j)\lambda_j^{1/2} \phi_j \big \} \otimes \big \{ \sum_l (Z_l - \wt{Z}_l)\lambda_l^{1/2} \phi_l \big \} \phi_k \nonumber \\
=& \sum_j \sum_l (Z_j - \wt{Z}_j)(Z_l - \wt{Z}_l) \lambda_j^{1/2} \lambda_l^{1/2} (\phi_j \otimes \phi_l) \phi_k \nonumber \\
=& \sum_j \sum_l (Z_j - \wt{Z}_j)(Z_l - \wt{Z}_l) \lambda_j^{1/2} \lambda_l^{1/2} \int \phi_j(s) \phi_l(t) \phi_k(t) dt \nonumber \\
=&\lambda_k^{1/2} (Z_k - \wt{Z}_k) \Big \{ \sum_j (Z_j - \wt{Z}_j) \lambda_j^{1/2} \phi_j \Big \}, \label{thm1}
\end{align}
the last equality is due to the orthogonality of the eigenfunctions.
On the other hand, since $Z_k$ and $\wt{Z}_k$ are identically distributed, $Z_k - \wt{Z}_k$ and $\wt{Z}_k - Z_k$ have the same distribution. Then for $k \neq j$,
\begin{align}
E \Big \{ \frac{(Z_k - \wt{Z}_k)(Z_j - \wt{Z}_j)}{\sum_{i} \lambda_i (Z_i - \wt{Z}_i)^2} \Big \} = E \Big \{ \frac{(\wt{Z}_k - Z_k)(Z_j - \wt{Z}_j)}{\sum_{i} \lambda_i (Z_i - \wt{Z}_i)^2} \Big \} = -E \Big \{ \frac{(Z_k - \wt{Z}_k)(Z_j - \wt{Z}_j)}{\sum_{i} \lambda_i (Z_i - \wt{Z}_i)^2} \Big \}. \nonumber
\end{align}
Thus
\begin{align}
E \Big \{ \frac{(Z_k - \wt{Z}_k)(Z_j - \wt{Z}_j)}{\|X - \wt{X}\|^2} \Big \} = E \Big \{ \frac{(Z_k - \wt{Z}_k)(Z_j - \wt{Z}_j)}{\sum_{i} \lambda_i (Z_i - \wt{Z}_i)^2} \Big \} = 0. \label{thm2}
\end{align}
Hence, by the results of (\ref{thm1}) and (\ref{thm2}),
\begin{align}
K \phi_k &= E \Big \{ \frac{(X - \wt{X}) \otimes (X - \wt{X})}{\|X - \wt{X}\|^2} \Big \} \phi_k \nonumber \\
&= \lambda_k^{1/2} E \Big [ \frac{(Z_k - \wt{Z}_k) \big \{ \sum_{j =1} (Z_j - \wt{Z}_j) \lambda_j^{1/2} \phi_j \big \}}{\|X - \wt{X}\|^2} \Big ] \nonumber \\
&= \lambda_k E \Big \{ \frac{(Z_k - \wt{Z}_k)^2}{\|X - \wt{X}\|^2} \Big \} \phi_k. \nonumber
\end{align}
Then $\phi_k$ is the eigenfunction of $K$ corresponding to $\lambda_k^{*}$, where $\lambda_k^{*} = \lambda_k E \{ (Z_k - \wt{Z}_k)^2/\|X - \wt{X}\|^2 \}$, so $\lambda_k^{*}$ has the same sign as $\lambda_k$. Further, for any $i > j$,
\begin{align}
\frac{\lambda_i^{*}}{\lambda_j^{*}} &= E \Big \{ \frac{\lambda_i(Z_i - \wt{Z}_i)^2}{\lambda_i(Z_i - \wt{Z}_i)^2 + \lambda_j(Z_j - \wt{Z}_j)^2 + C} \Big \} \Bigg / E \Big \{ \frac{\lambda_j(Z_j - \wt{Z}_j)^2}{\lambda_i(Z_i - \wt{Z}_i)^2 + \lambda_j(Z_j - \wt{Z}_j)^2 + C} \Big \} \nonumber \\
&< E \Big \{ \frac{\lambda_i(Z_i - \wt{Z}_i)^2}{\lambda_i(Z_i - \wt{Z}_i)^2 + \lambda_i(Z_j - \wt{Z}_j)^2 + C} \Big \} \Bigg / E \Big \{ \frac{\lambda_j(Z_j - \wt{Z}_j)^2}{\lambda_j(Z_i - \wt{Z}_i)^2 + \lambda_j(Z_j - \wt{Z}_j)^2 + C} \Big \} \nonumber \\
&= E \Big \{ \frac{(Z_i - \wt{Z}_i)^2}{(Z_i - \wt{Z}_i)^2 + (Z_j - \wt{Z}_j)^2 + C/\lambda_i} \Big \} \Bigg / E \Big \{ \frac{(Z_j - \wt{Z}_j)^2}{(Z_i - \wt{Z}_i)^2 + (Z_j - \wt{Z}_j)^2 + C/\lambda_j} \Big \} \nonumber \\
&= E \Big \{ \frac{(Z_i - \wt{Z}_i)^2}{(Z_i - \wt{Z}_i)^2 + (Z_j - \wt{Z}_j)^2 + C/\lambda_i} \Big \} \Bigg / E \Big \{ \frac{(Z_i - \wt{Z}_i)^2}{(Z_i - \wt{Z}_i)^2 + (Z_j - \wt{Z}_j)^2 + C/\lambda_j} \Big \} < 1, \nonumber
\end{align}
where $C = \sum_{k \neq i, j}^p \lambda_k(Z_k - \wt{Z}_k)^2$ and the last equality is achieved by the exchangeability of $Z_i - \wt{Z}_i$ and $Z_j - \wt{Z}_j$ as they have the same distribution. The proof is completed.

\end{proof}

\subsection{ Proofs of Theorem \ref{consistency} and Theorem \ref{eigenfunction} }

\begin{lemma}\label{integ}
Let $f$ be a function on interval $\mathcal{T}$ and $t_1, \ldots, t_{\mathcal{M}}$ are equally spaced on $\mathcal{T}$. Assume that the absolute value of the first derivative of $f$ is bounded. Then
\begin{align}
\Bigg | \frac{|\mathcal{T}|}{\mathcal{M}} \sum_{q = 1}^{\mathcal{M}} f(t_q) - \int_{\mathcal{T}} f(t)dt \Bigg | = O \big (\mathcal{M}^{-1} \big ). \nonumber
\end{align}
\end{lemma}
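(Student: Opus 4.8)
The plan is to read the left-hand side as the error committed when the integral $\int_{\mathcal{T}} f$ is approximated by the rectangle (Riemann-sum) rule on a uniform mesh, and to bound that error subinterval by subinterval using only the boundedness of $f'$. This is a standard quadrature estimate, so the work is mostly organizational.

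First I would fix notation: write $\mathcal{T} = [a,b]$, so that $|\mathcal{T}| = b-a$, set the mesh size $\delta = |\mathcal{T}|/\mathcal{M}$, and partition $\mathcal{T}$ into $\mathcal{M}$ consecutive subintervals $I_1, \dots, I_{\mathcal{M}}$, each of length $\delta$, arranged so that the $q$th equally spaced node $t_q$ lies in $I_q$. Because $|\mathcal{T}|/\mathcal{M} = \delta$ and $\int_{I_q} dt = \delta$, the two competing quantities can be put on a common footing: $\frac{|\mathcal{T}|}{\mathcal{M}} \sum_q f(t_q) = \sum_q \int_{I_q} f(t_q)\, dt$ while $\int_{\mathcal{T}} f = \sum_q \int_{I_q} f(t)\, dt$. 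Subtracting, their difference is exactly $\sum_{q=1}^{\mathcal{M}} \int_{I_q} \{ f(t_q) - f(t) \}\, dt$.

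The second step is the pointwise bound. Let $B = \sup_{\mathcal{T}} |f'|$, which is finite by hypothesis. For $t, t_q \in I_q$ the mean value theorem gives $|f(t_q) - f(t)| \le B\, |t_q - t| \le B\delta$. Integrating over $I_q$ contributes at most $B \delta^2$, and summing over the $\mathcal{M}$ cells yields
\[
\Bigg| \frac{|\mathcal{T}|}{\mathcal{M}} \sum_{q=1}^{\mathcal{M}} f(t_q) - \int_{\mathcal{T}} f(t)\, dt \Bigg| \le \mathcal{M}\, B\, \delta^2 = B\, |\mathcal{T}|^2\, \mathcal{M}^{-1},
\]
which is $O(\mathcal{M}^{-1})$ since $B$ and $|\mathcal{T}|$ are constants independent of $\mathcal{M}$.

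There is no genuine analytic obstacle here; the only point requiring a little care is the bookkeeping in the first step—arranging the partition so that each node $t_q$ sits in its own cell $I_q$ with $|t_q - t| \le \delta$ throughout $I_q$. This may force a half-cell adjustment at the two endpoints, depending on whether the equally spaced nodes are placed at cell left-endpoints, midpoints, or right-endpoints. Any such adjustment affects only finitely many cells and leaves the $O(\mathcal{M}^{-1})$ rate (and even the leading constant, up to a bounded factor) intact, so I would simply note that the construction is always possible and absorb the endpoint discrepancy into the constant.
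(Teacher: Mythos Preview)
Your proof is correct and follows essentially the same idea as the paper: bound the Riemann-sum error cell by cell using the boundedness of $f'$ (the paper phrases this as a first-order Taylor expansion, you as the mean value theorem). The only organizational difference is that the paper, taking $t_1,\dots,t_{\mathcal{M}}$ to include both endpoints, works with the right Riemann sum over $\mathcal{M}-1$ subintervals and then uses a separate triangle-inequality step to pass from the prefactor $|\mathcal{T}|/(\mathcal{M}-1)$ to $|\mathcal{T}|/\mathcal{M}$, whereas you partition directly into $\mathcal{M}$ cells and fold that discrepancy into your endpoint remark.
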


\begin{proof}[Proof of Lemma \ref{integ}]

Denote the right Riemann sum of $f$ as $A_{right}$ and $\{ \mathcal{T}_k \}_{k = 1}^{\mathcal{M} - 1}$ as the partition of $\mathcal{T}$. Then $|\mathcal{T}_k| = |\mathcal{T}|/(\mathcal{M} - 1)$ and $A_{right} = |\mathcal{T}| \sum_{q = 2}^{\mathcal{M}} f(t_q)/(\mathcal{M} - 1)$. By Taylor's expansion, we have
\begin{align}
\int_{\mathcal{T}_k} f(t) dt &= \int_{\mathcal{T}_k} \Big \{ f(t_{k + 1}) + O(|\mathcal{T}_k|) \Big \} dt \nonumber \\
&= f(t_{k + 1}) |\mathcal{T}_k| + O(|\mathcal{T}_k|^2) \nonumber \\
&= \frac{|\mathcal{T}|}{\mathcal{M} - 1} f(t_{k + 1}) + O \big (\mathcal{M}^{-2} \big ). \nonumber
\end{align}
Therefore,
\begin{align}
\int_{\mathcal{T}} f(t) dt = \sum_{k = 1}^{\mathcal{M} - 1} \int_{\mathcal{T}_k} f(t) dt = \frac{|\mathcal{T}|}{\mathcal{M} - 1} \sum_{q = 2}^{\mathcal{M}} f(t_q) + O \big (\mathcal{M}^{-1} \big ) = A_{right} + O \big (\mathcal{M}^{-1} \big ). \nonumber
\end{align}
Further,
\begin{align}
\Big | A_{right} - \int_{\mathcal{T}} f(t)dt \Big | = O \big (\mathcal{M}^{-1} \big ). \nonumber
\end{align}
As
\begin{align}
\frac{|\mathcal{T}|}{\mathcal{M}} \sum_{q = 1}^{\mathcal{M}} f(t_q) = \frac{\mathcal{M} - 1}{\mathcal{M}} A_{right} + \frac{|\mathcal{T}|}{\mathcal{M}} f(t_1), \nonumber
\end{align}
then
\begin{align}
\Bigg | \frac{|\mathcal{T}|}{\mathcal{M}} \sum_{q = 1}^{\mathcal{M}} f(t_q) - \int_{\mathcal{T}} f(t)dt \Bigg | &= \Bigg | \frac{\mathcal{M} - 1}{\mathcal{M}} A_{right} + \frac{|\mathcal{T}|}{\mathcal{M}} f(t_1) - \int_{\mathcal{T}} f(t)dt \Bigg | \nonumber \\
&\leq \frac{\mathcal{M} - 1}{\mathcal{M}} \Big | A_{right} - \int_{\mathcal{T}} f(t)dt \Big | + \frac{1}{\mathcal{M}} \Big | |\mathcal{T}| f(t_1) - \int_{\mathcal{T}} f(t)dt \Big | \nonumber \\
&= O \big (\mathcal{M}^{-1} \big ). \nonumber
\end{align}
The proof is completed.

\end{proof}

\begin{proof}[Proof of Theorem \ref{consistency}]

Define
\begin{align}
K_i^I(t_{ik}, t_{il}) &= \frac{1}{N-1} |\mathcal{T}| \sum_{j \neq i} \Big [ \frac{\big \{ X_{i}(t_{ik}) - X_{j}(t_{ik}) \big \} \big \{ X_{i}(t_{il}) - X_{j}(t_{il}) \big \}}{\int_{\mathcal{T}} \big \{ X_i(u) - X_j(u) \big \}^2 du} \Big ]. \nonumber
\end{align}

\noindent \textbf{(1)} The relationship between $K_i(t_{ik}, t_{il})$ and $K_i^I(t_{ik}, t_{il})$

We first explore the relationship between $K_i(t_{ik}, t_{il})$ and $K_i^I(t_{ik}, t_{il})$. For this purpose, we further define
\begin{align}
{K}_i^X(t_{ik}, t_{il}) &= \frac{1}{N-1} \sum_{j \neq i} \Big [ \frac{\big \{ X_{i}(t_{ik}) - X_{j}(t_{ik}) \big \} \big \{ X_{i}(t_{il}) - X_{j}(t_{il}) \big \}}{\sum_{q = 1}^{m_i} \big \{ X_i(t_{iq}) - X_j(t_{iq}) \big \}^2 / m_i} \Big ], \nonumber \\
{\wt{K}}_i^X(t_{ik}, t_{il}) &= \frac{1}{N-1} \sum_{j \neq i} \Big [ \frac{\big \{ X_{i}(t_{ik}) - X_{j}(t_{ik}) \big \} \big \{ X_{i}(t_{il}) - X_{j}(t_{il}) \big \}}{\sum_{q = 1}^{m_i} \big \{ X_i(t_{q}) - X_j(t_{q}) \big \}^2 / m_i} \Big ]. \nonumber
\end{align}

\noindent \textbf{(i)} $K_i(t_{ik}, t_{il})$ and ${K}_i^X(t_{ik},t_{il})$

Define functions $f_i(\textbf{x}) = \frac{x_1 x_2}{\sum_{k = 1}^{m_i} x_k^2 / m_i}, i = 1, \ldots, N$, where $\textbf{x} = (x_1, \ldots, x_{m_i})^\top \in \mathbb{R}^{m_i}$. Then
\begin{align}
\frac{\big \{ Y_{ik} - \wh{X}_{j}(t_{ik}) \big \} \big \{ Y_{il} - \wh{X}_{j}(t_{il}) \big \}}{\sum_{q = 1}^{m_i} \{ Y_{iq} - \wh{X}_j(t_{iq})\}^2 / m_i} &= f_i(Y_{ik} - \wh{X}_{j}(t_{ik}), Y_{il} - \wh{X}_{j}(t_{il}), \wh{A}_{ij}), \nonumber \\
\frac{\big \{ X_{i}(t_{ik}) - X_{j}(t_{ik}) \big \} \big \{ X_{i}(t_{il}) - X_{j}(t_{il}) \big \}}{\sum_{q = 1}^{m_i} \{X_i(t_{iq}) - X_j(t_{iq})\}^2 / m_i} &= f_i(X_{i}(t_{ik}) - X_{j}(t_{ik}), X_{i}(t_{il}) - X_{j}(t_{il}), A_{ij}), \nonumber
\end{align}
where $\wh{A}_{ij} = \big (Y_{i1} - \wh{X}_{j}(t_{i1}), \ldots, Y_{im_i} - \wh{X}_{j}(t_{im_i}) \big )^\top \in \mathbb{R}^{m_i-2}$ and $A_{ij} = \big (X_{i}(t_{i1}) - X_{j}(t_{i1}), \ldots, X_{i}(t_{im_i}) - X_{j}(t_{im_i}) \big )^\top \in \mathbb{R}^{m_i-2}$. We Taylor-expand $f_i(Y_{ik} - \wh{X}_{j}(t_{ik}), Y_{il} - \wh{X}_{j}(t_{il}), \wh{A}_{ij})$ about $f_i(X_{i}(t_{ik}) - X_{j}(t_{ik}), X_{i}(t_{il}) - X_{j}(t_{il}), A_{ij})$, obtaining
\begin{align}
&f_i(Y_{ik} - \wh{X}_{j}(t_{ik}), Y_{il} - \wh{X}_{j}(t_{il}), \wh{A}_{ij}) \nonumber \\
= &f_i(X_{i}(t_{ik}) - X_{j}(t_{ik}), X_{i}(t_{il}) - X_{j}(t_{il}), A_{ij}) + \nabla f_i(\textbf{x}_{ij})^\top \left ( \begin{array}{c}
\epsilon_{ik} - \wh{X}_{j}(t_{ik}) + X_j(t_{ik}) \\
\epsilon_{il} - \wh{X}_{j}(t_{il}) + X_j(t_{il}) \\
\ \vdots \\
\epsilon_{im_i} - \wh{X}_{j}(t_{im_i}) + X_j(t_{im_i})
\end{array} \right ), \nonumber
\end{align}
where $\textbf{x}_{ij} \in \mathbb{R}^{m_i}$ lies between $\big (Y_{ik} - \wh{X}_{j}(t_{ik}), Y_{il} - \wh{X}_{j}(t_{il}), \wh{A}_{ij} \big )^\top$ and $\big (X_{i}(t_{ik}) - X_{j}(t_{ik}), X_{i}(t_{il}) - X_{j}(t_{il}), A_{ij} \big )^\top$. Thereby
\begin{align}
K_i(t_{ik},t_{il}) = {K}_i^X(t_{ik}, t_{il}) + \frac{1}{N-1} \sum_{j \neq i} \nabla f_i(\textbf{x}_{ij})^\top \left ( \begin{array}{c}
\epsilon_{ik} - \wh{X}_{j}(t_{ik}) + X_j(t_{ik}) \\
\epsilon_{il} - \wh{X}_{j}(t_{il}) + X_j(t_{il}) \\
\ \vdots \\
\epsilon_{im_i} - \wh{X}_{j}(t_{im_i}) + X_j(t_{im_i})
\end{array} \right ). \label{i-1}
\end{align}

We then think about $ \epsilon_{ik} - \wh{X}_{j}(t_{ik}) + X_j(t_{ik}) $. We have
\begin{align}
&\epsilon_{ik} - \wh{X}_{j}(t_{ik}) + X_j(t_{ik}) \nonumber \\
=& \epsilon_{ik} - \sum_q^{m_j} \frac{ k_{h^{'}}(t_{ik} - t_{jq})}{\sum_p^{m_j} k_{h^{'}}(t_{ik} - t_{jp})} \big \{ Y_{jq} - X_j(t_{ik}) \big \} \nonumber \\
=& \epsilon_{ik} - \sum_q^{m_j} \frac{ k_{h^{'}}(t_{ik} - t_{jq})}{\sum_p^{m_j} k_{h^{'}}(t_{ik} - t_{jp})} \big \{ X_j(t_{jq}) - X_j(t_{ik}) + \epsilon_{jq} \big \} \nonumber \\
=& \sum_q^{m_j} \frac{ k_{h^{'}}(t_{ik} - t_{jq})}{\sum_p^{m_j} k_{h^{'}}(t_{ik} - t_{jp})} \big \{ X_j(t_{ik}) - X_j(t_{jq}) + \epsilon_{ik} - \epsilon_{jq} \big \} \nonumber \\
=& \sum_q^{m_j} w_{ijkq} \big \{ X_j(t_{ik}) - X_j(t_{jq}) \big \} + \sum_q^{m_j} w_{ijkq} \big ( \epsilon_{ik} - \epsilon_{jq} \big ), \label{i-odd}
\end{align}
where $w_{ijkq} = k_{h^{'}}(t_{ik} - t_{jq})/\sum_p^{m_j} k_{h^{'}}(t_{ik} - t_{jp})$. Further,
\begin{align}
\Big | \sum_q^{m_j} w_{ijkq} \big \{ X_j(t_{ik}) - X_j(t_{jq}) \big \} \Big | &\leq \sum_q^{m_j} w_{ijkq} \big | X_j(t_{ik}) - X_j(t_{jq}) \big | \nonumber \\
& \leq C \sum_q^{m_j} w_{ijkq} \big | t_{ik} - t_{jq} \big | \leq Ch^{'}, \label{i-odd1}
\end{align}
the second and third inequalities are obtained from Assumption \ref{Xder} and Assumption \ref{kernel} respectively. As $\nabla f_i$ is bounded, combining (\ref{i-1}), (\ref{i-odd}) and (\ref{i-odd1}), we have
\begin{align}
K_i(t_{ik},t_{il}) &= {K}_i^X(t_{ik}, t_{il}) + O(h^{'}) + \frac{1}{N-1} \sum_{j \neq i} \nabla f_i(\textbf{x}_{ij})^\top \left ( \begin{array}{c}
\sum_q^{m_j} w_{ijkq} \big ( \epsilon_{ik} - \epsilon_{jq} \big ) \\
\sum_q^{m_j} w_{ijlq} \big ( \epsilon_{il} - \epsilon_{jq} \big ) \\
\ \vdots \\
\sum_q^{m_j} w_{ijm_iq} \big ( \epsilon_{im_i} - \epsilon_{jq} \big )
\end{array} \right ) \nonumber \\
&\triangleq {K}_i^X(t_{ik}, t_{il}) + O(h^{'}) + E_{ikl}, \label{i}
\end{align}
where $E(E_{ikl}) = 0$.

\noindent \textbf{(ii)} ${K}_i^X(t_{ik}, t_{il})$ and ${\wt{K}}_i^X(t_{ik}, t_{il})$

Define functions $h_i(x_1, x_2, \textbf{y}) = \frac{x_1 x_2}{\sum_{k = 1}^{m_i} y_k^2 / m_i}, i = 1, \ldots, N$, where $\textbf{y} = (y_1, \ldots, y_{m_i})^\top \in \mathbb{R}^{m_i}$. Then by Taylor expansion, we can get
\begin{align}
{K}^X_i(t_{ik},t_{il}) = {\wt{K}}^X_i(t_{ik}, t_{il}) + \frac{1}{N-1} \sum_{j \neq i} \nabla h_i(\textbf{z}_{ij})^\top \left ( \begin{array}{c}
0 \\
0 \\
\Delta_{ij1}\\
\ \vdots \\
\Delta_{ijm_i}
\end{array} \right ), \nonumber
\end{align}
where $\textbf{z}_{ij}$ lies between the corresponding values and
\begin{align}
\Delta_{ijq} &= \big \{ X_i(t_{iq}) - X_j(t_{iq}) \big \} - \big \{ X_i(t_{q}) - X_j(t_{q}) \big \} \nonumber \\
&= \big \{ X_i(t_{iq}) - X_i(t_{q}) \big \} - \big \{ X_j(t_{iq}) - X_j(t_{q}) \big \} \nonumber \\
&= O_p(t_{iq} - t_{q}) = O_p(\Delta t) = o_p(1), \nonumber
\end{align}
the third equality is derived from Assumption \ref{Xder}, while the last two equalities are obtained from Assumption \ref{deltat}.
Since $\nabla h_i$ is bounded, we have
\begin{align}\label{ii}
{K}^X_i(t_{ik},t_{il}) = {\wt{K}}^X_i(t_{ik}, t_{il}) + o_p(1).
\end{align}

\noindent \textbf{(iii)} ${\wt{K}}_i^X(t_{ik}, t_{il})$ and $K_i^I(t_{ik}, t_{il})$

According to Lemma \ref{integ},
\begin{align}
{\wt{K}}_i^X(t_{ik}, t_{il}) &= \frac{1}{N-1} |\mathcal{T}| \sum_{j \neq i} \Big [ \frac{\big \{ X_{i}(t_{ik}) - X_{j}(t_{ik}) \big \} \big \{ X_{i}(t_{il}) - X_{j}(t_{il}) \big \}}{ (|\mathcal{T}| / m_i) \sum_{q = 1}^{m_i} \big \{ X_i(t_{q}) - X_j(t_{q}) \big \}^2} \Big ] \nonumber \\
&= \frac{1}{N-1} |\mathcal{T}| \sum_{j \neq i} \Big [ \frac{\big \{ X_{i}(t_{ik}) - X_{j}(t_{ik}) \big \} \big \{ X_{i}(t_{il}) - X_{j}(t_{il}) \big \}}{ \int \big \{ X_{i}(u) - X_{j}(u) \big \}^2 du + O_p(m_i^{-1}) } \Big ] \nonumber \\
&= \frac{1}{N-1} |\mathcal{T}| \sum_{j \neq i} \Big [ \frac{\big \{ X_{i}(t_{ik}) - X_{j}(t_{ik}) \big \} \big \{ X_{i}(t_{il}) - X_{j}(t_{il}) \big \}}{ \int \big \{ X_{i}(u) - X_{j}(u) \big \}^2 du } + O_p(m_i^{-1}) \Big ] \nonumber \\
&= \frac{1}{N-1} |\mathcal{T}| \sum_{j \neq i} \Big [ \frac{\big \{ X_{i}(t_{ik}) - X_{j}(t_{ik}) \big \} \big \{ X_{i}(t_{il}) - X_{j}(t_{il}) \big \}}{ \int \big \{ X_{i}(u) - X_{j}(u) \big \}^2 du } \Big ] + O_p(m_i^{-1}) \nonumber \\
&= \frac{1}{N-1} |\mathcal{T}| \sum_{j \neq i} \Big [ \frac{\big \{ X_{i}(t_{ik}) - X_{j}(t_{ik}) \big \} \big \{ X_{i}(t_{il}) - X_{j}(t_{il}) \big \}}{ \int \big \{ X_{i}(u) - X_{j}(u) \big \}^2 du } \Big ] + O_p(M^{-1}) \nonumber \\
&= K_i^I(t_{ik}, t_{il}) + O_p(M^{-1}). \label{iii}
\end{align}

Combining (\ref{i}), (\ref{ii}) and (\ref{iii}), we get
\begin{align}\label{part1}
K_i(t_{ik}, t_{il}) = K_i^I(t_{ik}, t_{il}) + E_{ikl} + O_p \Big (M^{-1} + h^{'} \Big )
\end{align}

\noindent \textbf{(2)} Consistency

Based on straightforward calculation, we can get
\begin{align}
\wh{K}_0(s, t) = (\mathcal{A}_1K_{00} - \mathcal{A}_2K_{10} -\mathcal{A}_3K_{01}) \mathcal{B}^{-1}, \nonumber
\end{align}
where
\begin{align}
\mathcal{A}_1 &= S_{20}S_{02} - S_{11}^2, \mathcal{A}_2 = S_{10}S_{02} - S_{01}S_{11}, \mathcal{A}_3 = S_{01}S_{20} - S_{10}S_{11}, \nonumber \\
\mathcal{B} &= \mathcal{A}_1S_{00} - \mathcal{A}_2S_{10} - \mathcal{A}_3S_{01}, \nonumber \\
S_{pq} &= \frac{1}{n}\sum_{i = 1}^N \sum_{1 \leq k \neq l \leq m_i} \Big ( \frac{t_{ik} - s}{h} \Big )^p \Big ( \frac{t_{il} - t}{h} \Big )^q \kappa_h \big ( t_{ik} - s \big ) \kappa_h \big ( t_{il} - t \big ), \nonumber \\
K_{pq} &= \frac{1}{n}\sum_{i = 1}^N \sum_{1 \leq k \neq l \leq m_i} K_i(t_{ik}, t_{il}) \Big ( \frac{t_{ik} - s}{h} \Big )^p \Big ( \frac{t_{il} - t}{h} \Big )^q \kappa_h \big ( t_{ik} - s \big ) \kappa_h \big ( t_{il} - t \big ), \nonumber \\
n &= \sum_{i = 1}^N m_i(m_i - 1). \nonumber
\end{align}
Then
\begin{align}
(\wh{K}_0 - K_0)(s, t) = (\mathcal{A}_1K_{00}^{*} - \mathcal{A}_2K_{10}^{*} - \mathcal{A}_3K_{01}^{*}) \mathcal{B}^{-1}, \label{k0diff}
\end{align}
where
\begin{align}
K_{pq}^{*} = K_{pq} - K_0(s, t)S_{pq} - hK_0^{(1, 0)}(s, t)S_{p + 1, q} - hK_0^{(0, 1)}(s, t)S_{p, q+ 1}. \nonumber
\end{align}
Specifically,
\begin{align}
K_{00}^{*} = \frac{1}{n}\sum_{i = 1}^N \sum_{1 \leq k \neq l \leq m_i} &\big \{ K_i(t_{ik}, t_{il}) - K_0(s, t) - K_0^{(1, 0)}(s, t)(t_{ik} - s) - K_0^{(0, 1)}(s, t)(t_{il} - t) \big \} \nonumber \\
&\times \kappa_h \big ( t_{ik} - s \big ) \kappa_h \big ( t_{il} - t \big ). \nonumber
\end{align}
By (\ref{part1}) and Taylor's expansion,
\begin{align}
&K_i(t_{ik}, t_{il}) - K_0(s, t) - K_0^{(1, 0)}(s, t)(t_{ik} - s) - K_0^{(0, 1)}(s, t)(t_{il} - t) \nonumber \\
=& K_i^I(t_{ik}, t_{il}) + E_{ikl} + O_p \Big (M^{-1} + h^{'} \Big ) - K_0(s, t) - K_0^{(1, 0)}(s, t)(t_{ik} - s) - K_0^{(0, 1)}(s, t)(t_{il} - t) \nonumber \\
=& K_i^I(t_{ik}, t_{il}) + E_{ikl} - K_0(t_{ik}, t_{il}) + \big \{K_0(t_{ik}, t_{il})- K_0(s, t) \nonumber \\
& \qquad - K_0^{(1, 0)}(s, t)(t_{ik} - s) - K_0^{(0, 1)}(s, t)(t_{il} - t) \big \} + O_p \Big (M^{-1} + h^{'} \Big ) \nonumber \\
=& \epsilon_{ikl}^{*} + O_p(h^2 + M^{-1} + h^{'}), \nonumber
\end{align}
where $\epsilon_{ikl}^{*} = K_i^I(t_{ik}, t_{il}) - K_0(t_{ik}, t_{il}) + E_{ikl}$ and it is easy to observe that $E(\epsilon_{ikl}^{*}) = 0$.
Then we have
\begin{align}
K_{00}^{*} = \frac{1}{n}\sum_{i = 1}^N \sum_{1 \leq k \neq l \leq m_i} \epsilon_{ikl}^{*} \kappa_h \big ( t_{ik} - s \big ) \kappa_h \big ( t_{il} - t \big ) + O_p(h^2 + M^{-1} + h^{'}). \nonumber
\end{align}
Since $E(\epsilon_{ikl}^{*}) = 0$, by classical uniform convergence rates of kernel smoother \citep{li2011efficient, hansen2008uniform, masry1996multivariate},
\begin{align}
K_{00}^{*} = O_p \big (\{\log n / (nh^2)\}^{1/2} + h^2 + M^{-1} + h^{'} \big ). \label{K00}
\end{align}
In the same way, we can get
\begin{align}
K_{10}^{*} = O_p \big ( \{ \log n / (nh^2)\}^{1/2} + h^2 + M^{-1} + h^{'} \big ), \label{K10}\\
K_{01}^{*} = O_p \big ( \{ \log n / (nh^2)\}^{1/2} + h^2 + M^{-1} + h^{'} \big ). \label{K01}
\end{align}
Furthermore, we have
\begin{align}
S_{00} &= g(s)g(t) + O_p \big (\{\log n / (nh^2)\}^{1/2} + h^2 \big ), \nonumber \\
S_{10} &= O_p\big (\{\log n / (nh^2)\}^{1/2} + h \big ), \quad S_{01} = O_p\big (\{\log n / (nh^2)\}^{1/2} + h \big ), \nonumber \\
S_{20} &= g(s)g(t)\nu_2 + O_p\big (\{\log n / (nh^2)\}^{1/2} + h^2 \big ), \nonumber \\
S_{02} &= g(s)g(t)\nu_2 + O_p\big (\{\log n / (nh^2)\}^{1/2} + h^2 \big ), \nonumber \\
S_{11} &= O_p \big (\{\log n / (nh^2)\}^{1/2} + h^2 \big ). \nonumber
\end{align}
Thus
\begin{align}
\mathcal{A}_1 &= g(s)^2g(t)^2 \nu_2^2 + O_p\big (\{\log n / (nh^2)\}^{1/2} + h^2 \big ), \nonumber \\
\mathcal{A}_2 &= O_p\big (\{\log n / (nh^2)\}^{1/2} + h \big ), \quad \mathcal{A}_3 = O_p\big (\{\log n / (nh^2)\}^{1/2} + h \big ), \nonumber \\
\mathcal{B} &= g(s)^3g(t)^3 \nu_2^2 + O_p\big (\{\log n / (nh^2)\}^{1/2} + h \big ). \nonumber
\end{align}
Then
\begin{align}
\mathcal{A}_1 \mathcal{B}^{-1} &= g(s)^{-1}g(t)^{-1} + O_p\big (\{\log n / (nh^2)\}^{1/2} + h \big ) \label{A1B} \\
\mathcal{A}_2 \mathcal{B}^{-1} &= O_p\big (\{\log n / (nh^2)\}^{1/2} + h \big ) \label{A2B}\\
\mathcal{A}_3 \mathcal{B}^{-1} &= O_p\big (\{\log n / (nh^2)\}^{1/2} + h \big ) \label{A3B}
\end{align}
Combining (\ref{k0diff}) - (\ref{A3B}), we have
\begin{align}
\Big |\wh{K}_0(s, t) - K_0(s, t) \Big | = O_p \big (\{\log n / (nh^2)\}^{1/2} + h^2 + M^{-1} + h^{'} \big ), \forall s, t \in \mathcal{T}. \nonumber
\end{align}
According to Assumption \ref{TYm}, $n = \sum_{i = 1}^N m_i(m_i - 1) = O(NM^2)$, then
\begin{align}
\Big |\wh{K}_0(s, t) - K_0(s, t) \Big | = O_p \big (\tau_1 + \tau_2 \big ), \forall s, t \in \mathcal{T}, \nonumber
\end{align}
where $\tau_1 = \big (NM^2h^2/\log N \big )^{-1/2} + h^2$ and $\tau_2 = M^{-1} + h^{'}$. Then (\ref{consisform}) follows.

\end{proof}

For the proof of Theorem \ref{eigenfunction}, as an analogy of Theorem 2 in \citep{yao2005functional}, we can get the convergence rate of the estimated eigenfunctions directly making use of (\ref{consisform}).
More details can be found in \citet{yao2005functional}.

\bibliographystyle{unsrtnat}
\bibliography{ref}

\end{document}